\title{Almost-catalytic Computation} 
\author{Sagar Bisoyi\footnote{Work was done while the the author was a masters student at IIT Madras. Email:~{\tt sagarbisoyi@gmail.com}} \and Krishnamoothy Dinesh\footnote{Indian Institute of Technolgy, Palakkad, India. Email:{~\tt kdinesh@iitpkd.ac.in}} \and Bhabya Deep Rai\footnote{Indian Institute of Technology Madras, Chennai, India. Email: {~\tt \{cs21d200|jayalal\}@cse.iitm.ac.in}, The fourth author's work is also supported by SERB-CRG Grant No : CRG/2020/003553 by Govt of India.} \and Jayalal Sarma$^\ddagger$. }
\declaretheorem[name=Theorem,numberwithin=section]{theorem}
\newcommand{\expt}{\mathbb{E}}
\newcounter{todocounter}
\theoremstyle{plain}
\newtheorem{lemma}[theorem]{Lemma}
\newtheorem{proposition}[theorem]{Proposition}
\theoremstyle{definition}
\newtheorem{definition}[theorem]{Definition}
\newcommand{\F}{{\mathbb{F}}}
\newcommand\numberthis{\addtocounter{equation}{1}\tag{\theequation}}
\newcommand{\PARITY}{\textrm{\sc Parity}}
\newcommand{\ACL}{\textrm{\sf ACL}}
\newcommand{\CL}{\textrm{\sf CL}}
\renewcommand{\L}{\textrm{\sf L}}
\newcommand{\CNL}{\textrm{\sf CNL}}
\newcommand{\coCNL}{\textrm{\sf coCNL}}
\newcommand{\CSC}{\textrm{\sf CSC}}
\renewcommand{\CSL}{\textrm{\sf CSL}}
\newcommand{\CBPL}{\textrm{\sf CBPL}}
\newcommand{\CUL}{\textrm{\sf CUL}}
\newcommand{\ACSPACE}{\textrm{\sf ACSPACE}}
\newcommand{\CSPACE}{\textrm{\sf CSPACE}}
\newcommand{\calR}{\mathcal{R}}
\newcommand{\calP}{\mathcal{P}}
\def\movetoappendix{1}
\newenvironment{aproof}[2]
  { \@nameuse{collect}{appendix}
  { \subsection{#1} \label{#2} \begin{proof} } {\end{proof}}
  }{\@nameuse{endcollect}}
\newenvironment{appsection}[2]
  { \@nameuse{collect}{appendix}
  { \subsection{#1} \label{#2} }
  {}
  }{\@nameuse{endcollect}}
        \renewenvironment{aproof}[2]{\begin{proof}} {\end{proof} }
        \renewenvironment{appsection}[2]{} {}
\begin{document}
\maketitle
\vspace{-6mm}
\begin{abstract}
Designing algorithms for space bounded models with restoration requirements on (most of) the space used by the algorithm is an important challenge posed about the catalytic computation model introduced by Buhrman \textit{et al.} (2014). Motivated by the scenarios where we do not need to restore unless $w$ is {\em useful}, we relax the restoration requirement: only when the content of the catalytic tape is $w \in A \subseteq \Sigma^*$, the catalytic Turing machine needs to restore $w$ at the end of the computation. We define, $\ACL(A)$ to be the class of languages that can be accepted by almost-catalytic Turing machines with respect to $A$ (which we call the catalytic set), that uses at most $c\log n$ work space and $n^c$ catalytic space. We prove the following for the almost-catalytic model.
\begin{itemize}
\item We show that if there are almost-catalytic algorithms for a problem with catalytic set as $A \subseteq \Sigma^*$ and its complement respectively, then the problem can be solved by a zero-error randomized algorithm that runs in expected polynomial time. More formally, for any language $A \subset  eq \Sigma^*$, $\ACL(A) \cap \ACL(\overline{A}) \subseteq \ZPP$. In particular, when $A \in \L$, $\ACL(A) \cap \ACL(\overline{A}) = \CL$. This leads to newer algorithmic approaches for designing catalytic algorithms.
\item Using the above, we derive that to design catalytic algorithms for a language, it suffices to design almost-catalytic algorithms where the catalytic set is the set of strings of odd weight ($\PARITY$). 
Towards this, we consider two complexity measures of the set $A$ which are maximized for $\PARITY$. One is the random projection complexity (denoted by ${\cal R}(A)$) and the other is the subcube partition complexity (denoted by ${\cal P}(A)$). We show that, for all $k \ge 1$, there exists a language $A_k \subseteq \Sigma^* $ such that $\DSPACE(n^k) \subseteq \ACL(A_k)$ where for every $m \ge  1$, $\calR(A_k \cap \{0,1\}^m) \ge \frac{m}{4}$ and $\calP(A_k \cap \{0,1\}^m)=2^{m/4}$. This is in contrast to the catalytic machine model where it is unclear if it can accept all languages in $\DSPACE(\log^{1+\epsilon} n)$ for any $\epsilon > 0$.
\item 
Improving the partition complexity of the catalytic set $A$ further, we show that for all $k \ge 1$, there exists $A_k \subseteq \{0,1\}^*$ such that  $\DSPACE(\log^k n) \subseteq \ACL(A_k)$ where for every $m \ge 1$,  $\calR(A_k \cap \{0,1\}^m) \ge \frac{m}{4}$ and $\calP(A_k \cap \{0,1\}^m)=2^{m/4+\Omega(\log m)}$. Our main new technique for the last two items is the use of error correcting codes to design almost-catalytic algorithms. 
\item We also show that, even when there are more than two alphabet symbols, if the catalytic set $A$ does not use one of the alphabet symbols, then efficient almost-catalytic algorithms with $A$ as the catalytic set can be designed for any language in $\PSPACE$.
\end{itemize}

\end{abstract}
\newpage

\section{Introduction}

The catalytic Turing machine model (originally proposed by \cite{BCKLS14}) involves a Turing machine that is equipped with an input tape, a work tape and a special tape called the catalytic tape. Let $s,c: \mathbb{N} \longrightarrow \mathbb{N}$ be non-decreasing functions. A language $L$ is said to be decided by a \textit{catalytic} Turing machine $M$ in space $s(n)$ and using catalytic space $c(n)$ if on every input $x$ of length $n$ and arbitrary string $w \in \{0,1\}^{c(n)}$ of length $c(n)$ written on the catalytic tape, the machine halts with $w$ on its catalytic tape. During the computation, $M$ uses at most $s(n)$ tape cells on the work tape and $c(n)$ cells on its catalytic tape, and $M$ correctly outputs whether $x \in L$. $\CL$ is the class of languages that can be accepted by catalytic Turing machines that use at most $O(\log n)$ work space, and $O(n^c)$ catalytic space.

In addition to its theoretical appeal, the motivation for this model (c.f.~\cite{BCKLS14},\cite{BKLS18}) also comes from practically relevant contexts - where the memory that algorithms need is all used up to store otherwise useful data. In such situations, catalytic algorithms (and more formally, catalytic Turing machines) that guarantee restoration of the content of their catalytic tape to the original content, are arguably useful.
	
A natural question is whether this extra space (which needs to be restored to its original content at the end of the computation) helps at all. Quite surprisingly, \cite{BCKLS14} showed that $\L$-uniform $\TC^1 \subseteq \CL$. The fact that $\NL \subseteq \TC^1$ makes this immediately surprising for a space complexity theorist, because it implies that the directed graph reachability problem has a deterministic algorithm in the above model that uses $O(\log n)$ space in the worktape and at most $\poly(n)$ space in its catalytic tape.


\cite{BCKLS14} also showed that $\CL$ is contained in $\ZPP$. The main observation that leads to this upper bound is that two computations starting with different initial catalytic tape contents, say $w$ and $w'$ cannot reach the same configuration at any point in their computations on the same input. 
In a subsequent work, \cite{BKLS18} explores the power of non-determinism in catalytic space. $\CNL$ is the class of problems solvable by non-deterministic logspace catalytic Turing machines. Using similar ideas from \cite{BCKLS14}, it was shown in \cite{BKLS18},\cite{DGJST20} that the $\ZPP$ upper-bound holds even for non-deterministic and randomized variants of catalytic logspace classes. 
\cite{BKLS18} showed that under a plausible hardness assumption, $\CNL = \coCNL$. In a work by~\cite{DGJST20}, it is shown that under the same hardness assumption and using very similar techniques, $\CBPL = \CSL = \CSC^1 = \CL$. Recently,~\cite{CLMP24} completely removed the need for any hardness assumption and showed that $\CBPL = \CL$ unconditionally. Here $\CBPL$ and $\CSL$ are sets of languages solvable by logspace randomized and symmetric catalytic Turing machines, respectively. $\CSC^1$ denotes  the set of languages solved by catalytic log-space machines that run in polynomial time. 
\cite{GJST19} showed that under the same hardness assumption $\CNL = \CUL$. For more details, the reader is referred to the following surveys:~\cite{KouckySurvey},\cite{MertzSurvey}. Algorithmic techniques that were used to design catalytic algorithms have also been proven helpful in designing non-trivial space efficient algorithms for the Tree evaluation problem (proposed in~\cite{CMPDBS12}). For more details, see~\cite{CM24} and the references therein.
\\[-4mm]

\noindent {\bf Our Results:}
Motivated by the scenarios where we do not need to restore unless $w$ is {\em useful}, we relax the restoration requirement: only when the content of the catalytic tape is $w \in A \subseteq \Sigma^*$, the catalytic Turing machine needs to restore $w$ at the end of the computation.  Indeed, $A \subseteq \Sigma^*$ represents the set of ``useful'' $w$'s. We call such Turing machines as \textit{almost-catalytic Turing machines} and the languages accepted by such machines, using logarithmic work space and polynomial catalytic space as $\ACL(A)$~(See ~\cref{sec:prelims} for a formal definition). We call the set $A$ to be the \textit{catalytic set}.

Thus, the major challenge in this context is to design algorithms for useful catalytic sets. We first consider two ways of exploring the almost-catalytic in terms of the catalytic set $A$. Firstly in terms of the cardinality of $A$ and secondly in terms of the complexity of $A$. To start with, observe that $\forall A \subseteq \Sigma^*$, $\ACL(A) \subseteq \PSPACE$. In addition, it is easy to observe that $\ACL(\Sigma^*) = \CL$ and $\ACL(\emptyset) = \PSPACE$. Given this, one natural way to work towards catalytic logspace algorithms for $\PSPACE$ from almost-catalytic algorithms is to parameterize based on the size of $A$. Defining $f(n) = |A \cap \{0,1\}^n|$ to be a measure of sparsity of $A$, we are interested to see how close can the function $f(n)$ be to $2^n$, such that we have almost-catalytic algorithms for every language in $\PSPACE$. 

In this direction, it is easy to see that if $A$ is a tally set ($A \subseteq \{1\}^*$), then $\PSPACE = \ACL(A)$. Such a consequence is unclear if $A$ is only known to be polynomially sparse. However, if $A$ is polynomially sparse with low space complexity, then, we can simulate the whole of $\PSPACE$ using almost-catalytic Turing machines.
That is, for any sparse set $A \in \L$, $\ACL(A) = \PSPACE$ (see \cref{prop:ACA-sparse-PSPACE}). Indeed, it is more challenging to design $\ACL(A)$ algorithms for every language in $\PSPACE$ when $A$ is large in size.

However, we note that there is a set $A$ with exponential density for which we can design almost-catalytic algorithms to accept any language in $\DSPACE(n^k)$ (see~\cref{thm:code-improved-catalytic-reviewer}). This implies that $|A\cap \{0,1\}^n|$ is not a good parameter to measure our progress towards designing catalytic algorithms by this approach.

To make further progress, we turn to the structural front. We show a limitation of the almost-catalytic Turing machines with respect to $A$ by showing the following upper bound.

\begin{restatable}[]{theorem}{ACLZPPandL}\label{thm:ACL-ZPP-L}
For any $A \subseteq \Sigma^*$, it holds that $\ACL(A) \cap \ACL(\overline{A}) \subseteq \ZPP$. If $A \in \L$ then $\ACL(A) \cap \ACL(\overline{A}) = \CL$.
\end{restatable}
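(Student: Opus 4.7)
The plan is to lift the disjointness argument of Buhrman et al.\ (which shows $\CL \subseteq \ZPP$) to the almost-catalytic setting, noting that disjointness of configuration trajectories continues to hold conditional on the catalytic content lying in the restoring set of the respective machine. Let $L \in \ACL(A) \cap \ACL(\overline{A})$, and let $M_1, M_2$ be almost-catalytic logspace machines with catalytic sets $A$ and $\overline{A}$ respectively, both correctly deciding $L$ on every initial catalytic content, and using polynomial catalytic space $n^c$. The $\ZPP$ algorithm on input $x$ samples a uniformly random catalytic string $w \in \{0,1\}^{n^c}$, simulates $M_1$ and $M_2$ in lockstep on $(x,w)$, and outputs the answer of whichever machine halts first. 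Correctness is immediate since both machines always decide $L$; what needs proof is that the expected halting time is polynomial.

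For the time bound, I adapt the Buhrman--et al.\ disjointness argument: for distinct $w, w' \in A$ the full-configuration trajectories of $M_1$ on $(x,w)$ and $(x,w')$ must be disjoint, since otherwise from the common configuration the two computations coincide and so terminate with identical catalytic contents, contradicting the restoration guarantee on both $w$ and $w'$. As the number of possible full configurations of $M_1$ is at most $N := 2^{n^c}\cdot \mathrm{poly}(n)$, this gives $\sum_{w \in A} T_{M_1,w} \le N$ and symmetrically $\sum_{w \in \overline{A}} T_{M_2,w} \le N$. Bounding $\min(T_{M_1,w},T_{M_2,w})$ by $T_{M_1,w}\cdot\mathbf{1}[w\in A]+T_{M_2,w}\cdot\mathbf{1}[w\in\overline{A}]$ and averaging over uniform $w \in \{0,1\}^{n^c}$ gives expected stopping time at most $2N/2^{n^c} = \mathrm{poly}(n)$. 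The main conceptual step is recognizing that restoration on the ``correct side'' of the partition is exactly what one needs in order to keep the disjointness argument alive for each machine individually; once that is in place, the rest is routine counting.

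For the second part, the inclusion $\CL \subseteq \ACL(A)\cap\ACL(\overline{A})$ is immediate since any catalytic machine is in particular almost-catalytic with respect to every set. For the converse, given a logspace decider $N$ for $A$ and almost-catalytic machines $M_1, M_2$, I build a genuine catalytic machine $M$ that on $(x,w)$ first runs $N$ on $w$ (reading the catalytic tape as read-only input and using $O(\log n^c) = O(\log n)$ work space) to decide whether $w\in A$, then simulates $M_1$ if $w\in A$ and $M_2$ otherwise. In the chosen branch the simulated machine is operating inside its own catalytic set, so it correctly decides $L$ and restores $w$; total work space remains logarithmic, yielding $L \in \CL$.
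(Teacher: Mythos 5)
Your proof is correct and follows essentially the same route as the paper's: the lockstep simulation of $M_1$ and $M_2$ on a random $w$, the disjointness-of-trajectories argument restricted to the relevant side of the partition (which is exactly the paper's Lemma~\ref{cat-configs}), the resulting bound $\sum_{w\in A} T_{M_1,w} \le 2^{n^c}\cdot\mathrm{poly}(n)$, and the averaging step; the second part likewise matches (trivial forward inclusion, and a branch on an $\L$ decision of $w\in A$ for the converse). The only cosmetic difference is that you bound $\min(T_1,T_2)$ pointwise by $T_1\mathbf{1}[w\in A]+T_2\mathbf{1}[w\in\overline A]$ before averaging, whereas the paper conditions the expectation on $w\in A$ versus $w\in\overline A$; these are the same calculation.
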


The first part of the above theorem and the argument is a generalization of the idea in \cite{BCKLS14} which shows $\CL \subseteq \ZPP$.	In particular, when $A=\Sigma^*$ or $A=\emptyset$, we recover their result. We remark that this generalization is different from the compress-or-random method that appears in~\cite{CLMP24,P23}. We also remark that, unlike the arguments in \cite{BCKLS14}, the Theorem~\ref{thm:ACL-ZPP-L} or the proof of it, does not imply for any almost-catalytic Turing machine runs in expected polynomial time.
The second part of the above theorem can also be viewed as a method of obtaining catalytic algorithms by designing almost-catalytic algorithms with respect to an appropriate set $A$. 

A notable example of such a set is the language $\PARITY$ consisting of strings over $\{0,1\}^*$ with an odd number of ones. Indeed, $\PARITY \in \L$. However, if we have an almost-catalytic algorithm for a language $L$ with the catalytic set being  $\PARITY$, then there is an almost-catalytic algorithm with respect to $\overline{\PARITY}$ as well (See Proposition~\ref{prop:parity-almost-paritybar}). Hence, $\ACL(\PARITY) = \CL$. Thus, it suffices to design almost-catalytic algorithms with respect to $\PARITY$ and we set this as the target.

To measure our progress towards the set $\PARITY$, we define two measures for the set $A$, defined below, which are maximized for parity.

\begin{description}
\item{\bf Random Projection Complexity:} For an $A \subseteq \{0,1\}^m$ , we define, for an $\epsilon \ge 0$, the random projection complexity, ${\calR}_{\epsilon}(A)$ as the largest $\ell \geq 0 $ such that:  
$\Pr_{\substack{ T \subseteq [m] \\ |T| = \ell}} \left[ \left| A_T \right| \geq 2^{\ell-1} \right] \geq 1-\epsilon$
where $A_T$ denotes the set of strings in $A$ projected to the indices in $T$.
Observe that ${\calR}_{0}(\PARITY_m) = m-1$. Thus, in order to approach $A=\PARITY$, we will design almost-catalytic computation with respect to set $A$, where $\calR_\epsilon(A)$ is as large as possible where $\epsilon$ is close to $0$, say $2^{-\alpha m}$ for some small constant $0 \le \alpha< 1$. In this case, we use $\calR(A)$ to denote $\calR_{2^{-\alpha m}}(A)$.\\

\item{\bf Subcube Partition Complexity:} 
A subcube $C$ of the cube $\{0,1\}^m$ is given by a mapping (partial assignment) $\alpha : [n] \to \{0,1,*\}$ and is defined to be the set of all vectors in the Boolean hypercube on $n$ bits, $B_n$, that agree with $\alpha$ on coordinates that are assigned a non-$*$ value by $\alpha$. More precisely the subcube $C_\alpha$ is the set $\{x \in \{0,1\}^m : \alpha(i) \ne * \implies
x_i = \alpha(i)\}$. 
For a set $A$, a partition $C = \{C_1, \ldots, C_t\}$ of $A$ into subcubes $C_i$ such that $C_i \subseteq A$ is called a subcube partition of $A$. We denote by $\calP(A)$ the minimum number of subcubes in a subcube partition of $A$.
Observe that $\calP(\PARITY_m) = 2^{m-1}$. Thus, in order to approach $A = \PARITY$, we propose to design almost-catalytic algorithms for sets with high partition complexity.
\end{description}

We remark about the choice of the above two measures in our journey towards achieving $\PARITY$ as our catalytic set. As noted earlier, there are specific catalytic sets (see \cref{thm:code-improved-catalytic-reviewer}) $A \subseteq \{0,1\}^*$ which are of exponential density for which $\PSPACE \subseteq \ACL(A)$. However, it can be shown (see~\cref{prop:measures-for-reviewers-A}) that this catalytic set $A$ has a subcube partition complexity $\calP(A)$ of $1$ and small random projection complexity $\calR(A)$. Hence, it is natural to look for other catalytic sets $A$ such that $\ACL(A)$ is powerful enough to simulate polynomial space bounded computation, and which possess larger values for one or both of these measures.

As our next result, we show the following simulation of $\DSPACE(n^k)$  almost-catalytically for set $A$ with a large $\calP(A)$ and $\calR(A)$.

\begin{restatable}[]{theorem}{almostCatalytic} \label{Spthm}
For all $k \ge 1$, there exists a language $A_k \subseteq \{0,1\}^* $ such that $\DSPACE(n^k) \subseteq \ACL(A_k)$ where for every $m \ge  1$, $\calR(A_k \cap \{0,1\}^m) \ge \frac{m}{4}$ and $\calP(A_k \cap \{0,1\}^m)=2^{m/4}$.
\end{restatable}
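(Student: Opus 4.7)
The plan is to define the catalytic set $A_k$ as the image of a systematic error-correcting code and to exploit the code's redundancy to carry out the $\DSPACE(n^k)$ simulation almost-catalytically. Concretely, for each length $m$ (a multiple of $4$; other lengths handled separately by padding), fix a systematic linear code $E_m : \{0,1\}^{m/4} \to \{0,1\}^m$ of minimum distance at least $2$ whose generator and parity-check matrices $G_m, H_m$ are uniformly logspace-computable. I would set
\[
A_k \cap \{0,1\}^m \;=\; \mathrm{Im}(E_m) \;=\; \{(y, p(y)) : y \in \{0,1\}^{m/4}\},
\]
so every codeword carries a message $y$ of length $m/4$ in the first $m/4$ coordinates and $3m/4$ parity symbols $p(y)$ in the remaining coordinates.

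For the simulation, given $L \in \DSPACE(n^k)$, input $x$, and catalytic tape $w$ of length $n^c$ for $c$ chosen sufficiently larger than $k$: the machine first tests whether $w \in A_k$ in logspace by computing the syndrome $H_{n^c} w$. If $w \notin A_k$, no restoration is needed and the entire tape serves as $n^c$ bits of free scratch on which to directly simulate the $\DSPACE(n^k)$ algorithm for $L$ (correct since $n^c \gg n^k$). If $w \in A_k$, decompose $w = (y, p(y))$: the first $n^c/4$ positions (holding $y$) are never written to during the computation and act as a restoration seed, while the remaining $3n^c/4$ positions are treated as a fresh writable tape on which the $\DSPACE(n^k)$ algorithm is simulated (again fitting since $3n^c/4 \gg n^k$). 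Once the accept/reject bit is recorded on the worktape, the machine recomputes $p(y)$ from the preserved $y$ via $E_{n^c}$ and overwrites the parity region, restoring $w$ exactly.

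The subcube partition bound is immediate from the distance property: a subcube of size at least $2$ inside $\mathrm{Im}(E_m)$ would contain two codewords differing in precisely one coordinate (flip a single $*$-coordinate of the subcube), contradicting $d \ge 2$. Hence every subcube in any partition of $\mathrm{Im}(E_m)$ is a singleton, and $\calP(A_k \cap \{0,1\}^m) = |\mathrm{Im}(E_m)| = 2^{m/4}$.

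The main obstacle is the random-projection bound $\calR(A_k \cap \{0,1\}^m) \ge m/4$, which reduces to showing that for a random $T \subseteq [m]$ with $|T| = m/4$, the $m/4 \times m/4$ submatrix of $G_m$ on columns $T$ has rank at least $m/4 - 1$, with failure probability at most $2^{-\alpha m}$. A generic random linear code achieves this only with constant probability, and binary MDS codes are ruled out in this parameter regime, so the construction must use a code with strong coordinate-spread guarantees — either an explicit algebraic/concatenated code whose dual has well-controlled weight distribution, or a probabilistic existence argument tuned to the information-set density on a suitable ensemble — to drive the failure probability down to the required exponentially small rate.
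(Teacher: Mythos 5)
Your simulation via a systematic code --- preserve the message part $y$, scratch only in the parity region, and restore by re-encoding $y$ --- is correct and is a genuinely different (and in one respect simpler) route than the paper's. The paper also takes $A_k$ to be the codewords of an $[m,m/4,\alpha m]_2$ Spielman code, but it exploits that these codes have a \emph{logspace decoder}: the almost-catalytic machine overwrites only the first $cn^k$ cells of the tape (which corrupts at most a $\tfrac{\alpha}{2}$-fraction of the catalytic word, keeping it within the unique-decoding radius), and after the simulation it runs the logspace decoder to snap the tape back to $w$. Your variant avoids needing a decoder at all, at the modest price of requiring the systematic encoder to be logspace-computable and the first $m/4$ cells to remain untouched. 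The subcube-partition argument is the same in both: minimum distance $\ge 2$ forces every subcube contained in the codeword set to be a singleton, so $\calP(A_k \cap \{0,1\}^m) = |A_k \cap \{0,1\}^m| = 2^{m/4}$.

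The genuine gap is the random projection bound $\calR(A_k \cap \{0,1\}^m) \ge m/4$, which you correctly flag as unresolved but then mis-diagnose. The cure is not a delicate dual-weight-distribution or information-set-ensemble property; it is simply to demand \emph{constant relative distance} $\delta = \Omega(1)$ rather than distance $\ge 2$. (The Spielman code already has this, and passing a linear code to systematic form does not change its minimum distance.) With $\delta = \Omega(1)$, the paper's Lemma~\ref{lem:codes} gives the projection bound by an elementary union bound over codeword pairs: for any two distinct codewords, a uniformly random $T$ of size $\ell$ lands entirely inside their agreement set with probability at most $(1-\delta)^\ell$, so with probability $1-2^{-\Omega(m)}$ a large subfamily of codewords projects injectively and $|A_T| \ge 2^{\ell-1}$ for $\ell=m/4$. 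Your reformulation in terms of the rank of a random $(m/4)\times(m/4)$ submatrix of $G_m$ is logically equivalent (that rank controls $\log_2|A_T|$), but it obscures the fact that minimum distance is exactly the quantity driving the union bound, and it leads you to an unwarranted pessimism about what ``generic'' codes achieve. So the fix is small: keep your systematic/re-encoding simulation, instantiate it with a logspace-encodable systematic code of constant relative distance (a systematized Spielman code suffices), and invoke the distance-based projection argument of Lemma~\ref{lem:codes} instead of reaching for a stronger coordinate-spread guarantee.
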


An important challenge in designing catalytic algorithms is the incompressibility of the string $w$. However, in our context, the set $A_k$ in the above theorem may be viewed as compressible since the set of codewords can be represented by the set of messages. But note that this compressibility is not directly useful for designing the almost-catalytic algorithm since the message length can also be linear in $n$ and hence cannot be stored in the logarithmic work space.

Going further, when we need to simulate only polylogarithmic space, the partition complexity of the catalytic set $A$ for which we restore can be improved. We prove the following theorem in this direction:


\begin{restatable}[]{theorem}{codeCatalytic}
\label{thm:code-improved-catalytic}
For all $k \ge 1$, there exists $A_k \subseteq \{0,1\}^*$ such that  $\DSPACE(\log^k n) \subseteq \ACL(A_k)$ where for every $m \ge 1$,  $\calR(A_k \cap \{0,1\}^m) \ge \frac{m}{4}$ and $\calP(A_k \cap \{0,1\}^m)=2^{m/4+\Omega(\log m)}$.
\end{restatable}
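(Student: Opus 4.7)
The plan is to strengthen the construction underlying \cref{Spthm} by replacing its error-correcting code with one of slightly higher dimension, while retaining enough minimum distance to push the subcube partition complexity above the singleton floor already present in \cref{Spthm}. Concretely, fix an explicit binary linear code $C_m \subseteq \{0,1\}^m$ of dimension $k^\star = m/4 + \Omega(\log m)$ and minimum distance $d \ge 3$ (shortenings of Reed--Muller or BCH codes furnish such families with log-space-computable encoders and parity checks), and declare $A_k \cap \{0,1\}^m := C_m$.

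The two complexity bounds follow cleanly from the parameters of $C_m$. For the partition bound, since $d(C_m) \ge 3$ implies no two codewords lie at Hamming distance $\le 1$, no subcube of dimension $\ge 1$ is contained in $C_m$; every subcube in any subcube partition must therefore be a singleton, giving $\calP(A_k \cap \{0,1\}^m) = |C_m| = 2^{m/4 + \Omega(\log m)}$. For the random projection bound, the projection of $C_m$ onto a random subset $T \subseteq [m]$ of size $m/4$ is the image of the $k^\star \times (m/4)$ submatrix of the generator of $C_m$ restricted to columns in $T$; standard rank-concentration arguments for random column subsets of a good-distance code (or a direct analysis for the explicit $C_m$ chosen) show this image has size $\ge 2^{m/4 - 1}$ except with probability $2^{-\Omega(m)}$, establishing $\calR(A_k \cap \{0,1\}^m) \ge m/4$.

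For the almost-catalytic simulation of $\DSPACE(\log^k n)$, the strategy adapts the reversible-XOR scheme from \cref{Spthm}, with the added constraint that every XOR-update to the catalytic tape is itself a codeword of $C_m$, so $w$ stays in $C_m$ whenever it starts there. Given a $\DSPACE(\log^k n)$ machine $M$, the simulator $M'$ first tests $w \in C_m$ using the parity-check matrix (in $O(\log n)$ space, as the matrix is log-space-constructible). If $w \notin C_m$, $M'$ treats the catalytic tape as unrestricted polynomial workspace, simulates $M$ directly, and is not required to restore. If $w \in C_m$, $M'$ simulates $M$ step by step while representing each bit of $M$'s $\log^k n$-bit work tape implicitly via XOR of short $C_m$-codewords into $w$; after determining acceptance, $M'$ replays the simulation in reverse to undo the XORs and restore $w$.

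The main obstacle is the combined constraint that (i) $M'$ uses only $O(\log n)$ bits of work-tape space while it tracks a $\log^k n$-bit state of $M$, and (ii) every bit flip of that state is realised as an XOR with a codeword of $C_m$ enumerable in log-space. This is handled by choosing $C_m$ in systematic form with sparse, log-space-enumerable generator rows, so that a single bit flip in $M$'s work tape corresponds to XOR-ing $w$ with one generator row (one information bit together with its $O(\log m)$ parity dependents). Guaranteeing that the rank of random generator submatrices concentrates at $m/4 - O(1)$ for such an explicit choice of $C_m$, rather than only at $\Theta(k^\star)$, is the other delicate point where the choice of code is critical.
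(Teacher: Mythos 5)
Your approach is genuinely different from the paper's and, as written, has gaps that would prevent it from working.

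The paper does \emph{not} increase the dimension of the code. It keeps the dimension at $m/4$ (a $[m,m/4,\alpha m]_2$ Spielman code with \emph{constant} relative distance $\alpha$), and instead defines $A_k\cap\{0,1\}^m$ to be the \emph{union of decoding balls} around codewords (all strings within Hamming distance roughly $\frac{m}{\log^{k-1}n\,\log\log^k n}$ of some codeword, which lies inside the unique-decoding radius). The extra $\Omega(\log m)$ in the exponent of $\calP$ then comes from \cref{lem:partition-complexity-ball-main}: each Hamming ball contributes $\Omega(\sqrt m)$ to the partition complexity, so $\calP(A_k\cap\{0,1\}^m)=\Omega(2^{m/4}\sqrt m)=2^{m/4+\Omega(\log m)}$. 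Your idea of instead boosting the code dimension to $m/4+\Omega(\log m)$ would also produce the desired partition count if the code had minimum distance $\ge 2$, but it abandons the paper's structural idea entirely, and it creates the problems below.

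The concrete gaps. First, $d\ge 3$ is far too weak for the random-projection bound. \cref{lem:codes} needs \emph{constant} relative distance $\delta$: the bound hinges on the pairwise collision probability $(1-\delta)^\ell$ beating a union bound over $\binom{2^{k}}{2}$ pairs, which requires $\ell\gtrsim 2k/\log(1/(1-\delta))$. With $\delta=3/m\to 0$ this forces $\ell=\omega(m)$, so the argument collapses; ``standard rank-concentration arguments'' for an arbitrary explicit $d=3$ code have no reason to hold. Second, $d\ge 3$ corrects a single error, but the simulation must survive $\Theta(\log^k n)$ bit changes to the catalytic tape; nothing with decoding radius $1$ can do this. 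Third, the ``reversible-XOR scheme from \cref{Spthm}'' does not exist in the paper: \cref{Spthm} simply runs $M$ on the first $cn^k$ cells and decodes once at the end. Your XOR-and-reverse scheme is also underspecified in an essential way: after XORing generator rows into $w$, reading a bit of $M$'s work tape requires knowing the original $w$ (i.e.\ decoding), yet $M'$ has only $O(\log n)$ work space and the catalytic tape is carrying up to $\log^k n$ ``errors'' mid-simulation, so reading and replaying in reverse are not free. The paper sidesteps all of this by having $M'$ use one short block $B_i$ of the catalytic tape (of length $\log^k n$) as raw workspace, chosen so that block has at most $\frac{\log n}{\log\log^k n}$ corrupted positions; $M'$ remembers those few positions in its work tape and restores by comparing the pre- and post-computation corruption pattern computed via Spielman's log-space error-locating function $f_m$, never needing to decode the whole word.

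In short: the intended new idea (expand $A_k$ beyond codewords to the decoding balls, and count via Hamming-ball partition complexity) is what you are missing, and the parameter choice $d\ge 3$ fails both the $\calR$ bound and the simulation.
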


We remark that this is in contrast to the catalytic machine model where it is unclear if it can accept all languages in $\DSPACE(\log^{1+\epsilon} n)$ for any $\epsilon > 0$.
At the other end of the spectrum, note that, if \cref{thm:code-improved-catalytic} holds when $A_k \cap \{0,1\}^m$  covers the whole of $\{0,1\}^m$ (or even the set $\PARITY$), then it would imply that $\DSPACE(\log^k n) \subseteq \CL$. Since $\CL  \subseteq \ZPP$\cite{BCKLS14}, this would show that $\DSPACE(\log^k n) \subseteq \ZPP$, which in-turn would separate $\L$ and $\ZPP$ by the space hierarchy theorem. 

Exploring the power of additional alphabets, we show that even if the catalytic tape alphabet has even a single symbol that is not included in the alphabet for the catalytic set (irrespective of the size), the almost-catalytic machine can simulate the whole of $\PSPACE$ (See Proposition~\ref{prop:3-size-alpha}).


\paragraph*{\bf Our Techniques:} 
Our technique starts with a novel approach towards designing almost-catalytic algorithms using codes that can be decoded space efficiently. At a high level, the idea is as follows: let us say we want to design a catalytic Turing machine accepting a language $L$ which has a Turing machine that runs in space $c(n)$. For the catalytic Turing machine, the given content of the catalytic tape can be treated as the codeword (for a fixed code), the Turing machine proceeds to modify the content of the tape according to its computational needs.  The modification of the work tape during computation can be seen as introducing ``errors'' to the codeword. Finally we use the decoding algorithm to correct the ``errors'' and finally obtain the original codeword we started with, thus achieving the restoration condition.
	
Indeed, there are a number of challenges in implementing the above plan. The first limitation is that the number of bits modified to the initial string on the catalytic tape must be such that the modified string (after computation) is still within a decodable distance from the original word. Thus, if $c(n)$ is the catalytic space available, we can hope to allow the catalytic TM to use only strictly $o(c(n))$ bits in the catalytic tape during the computation. Thus, an interesting target is to simulate normal Turing machines that use an asymptotically smaller amount of space. 

A second challenge is that the code must be decodable in deterministic logarithmic space, as we have only so much work space. Fortunately, there are codes that have constant rate and constant relative distance, for which logspace decoding algorithms are known (See Theorem 19 in \cite{linearSpielman} and Theorem 14 in \cite{GK06}). Using additional decodability properties of Spielman codes, we show that the set $A$ can be expanded to achieve larger random projection complexity and larger subcube partition complexity, thus progressing towards $A = \PARITY$. In order to establish the progress in terms of measures of the catalytic sets, we also employ techniques from basic combinatorics of codes, and Fourier analysis of Boolean functions to estimate the partition complexity of the catalytic set in our algorithms.
	
	

 \paragraph{Related Work:} In a simultaneous work, \cite{GJST24}, considers the model of \textit{lossy catalytic computation} which is a catalytic Turing machine with an $O(\log n)$-sized work tape and polynomial-sized catalytic tape where the restoration condition is weaker - only \textit{all except a constant number of bits} are needed to be restored. They show that this relaxation does not add any power to the model - such catalytic Turing machines can only accept languages accepted by standard catalytic Turing machines with the same amount of catalytic space and work space. The technique that they use is to hash the first bits of the configuration spaces when the number of changes are limited. We remark that this is incomparable with the relaxation that we impose where for some strings (strings outside the set $A$) it is not even needed to be restored, while for some other strings (that is, strings in $A$), they need to be restored without a loss. Our techniques and motivating questions are also different from the work of \cite{GJST24}.

	\section{Preliminaries} \label{sec:prelims}

    We begin by defining catalytic computation as described by \cite{BCKLS14}. We refer the reader to standard references~\cite{Goldreich_2008,arora-barak} for definitions of the complexity classes not defined in this paper.

    A catalytic Turing machine is a Turing machine with a read-only input tape, a work tape of size $s(n)$, and a catalytic tape of size $c(n)$ initially containing some $w \in \{0,1\}^{c(n)}$, where $n$ is the size of the input. The machine  $M$ is said to decide a language $L$ if (1)~$x\in L$ if and only if $M$ accepts on input $x$ for all possible initial catalytic content $w$ and (2)~For each input $x \in \{0,1\}^n$ and any initial catalytic tape content $w$, $M$ halts with $w$ on its catalytic tape.
    We shall use the term \textit{catalytic space} to denote the space in the catalytic tape. The class $\CSPACE(s(n), c(n))$ is the set of all languages decided by a catalytic Turing machine with work space $s(n)$ and catalytic space $c(n)$. The class $\CL$ denotes $\CSPACE(O(\log n), \poly(n))$.

    

\subsection{Bounds on the Complexity Measures}
    Recall, from the introduction, that for a set $A \subseteq \{0,1\}^m$, we use $\mathcal{R}(A)$ to denote its Random projection complexity and $\mathcal{P}(A)$ to denote its Subcube partition complexity. For an integer $b$ dividing $m$, let $A_b= \{w \mid w \text{ is of the form } 0^{m/b}(0+1)^{m-m/b}\} \subseteq \{0,1\}^m$. 
    
    \begin{proposition}       
     \label{prop:measures-for-reviewers-A}
        For the set $A_b$ defined above, $\mathcal{P}(A_b)=1$ and for a constant $b$, $\mathcal{R}(A_b) = O(1)$.
    \end{proposition}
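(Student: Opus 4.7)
My plan is to handle the two measures separately, since they call for quite different arguments. For the subcube partition complexity, I observe that $A_b$ is itself already a subcube of $\{0,1\}^m$: namely, the subcube corresponding to the partial assignment that sets the first $m/b$ coordinates to $0$ and leaves the remaining $m - m/b$ coordinates free. Therefore $\{A_b\}$ is a valid one-piece subcube partition, giving $\mathcal{P}(A_b) \leq 1$; combined with the trivial lower bound $\mathcal{P}(A_b) \geq 1$ (since $A_b$ is nonempty), this yields $\mathcal{P}(A_b) = 1$.

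For the random projection complexity, the key observation I would use is the following. Let $S = [m/b]$ denote the set of ``fixed'' coordinates. For any $T \subseteq [m]$ with $|T| = \ell$, if $k = |T \cap S|$, then the projection $(A_b)_T$ is exactly the set of length-$\ell$ strings whose $k$ coordinates from $T \cap S$ are $0$ and whose $\ell - k$ coordinates from $T \setminus S$ are unconstrained. Hence $|(A_b)_T| = 2^{\ell - k}$, and the condition $|(A_b)_T| \geq 2^{\ell - 1}$ becomes equivalent to $k \leq 1$.

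Given this reformulation, I would show that for any fixed $\ell \geq 2$ and constant $b \geq 2$, the probability that a random size-$\ell$ subset $T \subseteq [m]$ has $|T \cap S| \geq 2$ does not vanish with $m$. Indeed $|T \cap S|$ follows a hypergeometric distribution with mean $\ell/b$, and for constant $\ell$ and $b$ its law converges as $m \to \infty$ to $\mathrm{Binomial}(\ell, 1/b)$. Hence $\Pr[|T \cap S| \geq 2]$ tends to a positive constant $c(b, \ell) > 0$, which for all sufficiently large $m$ exceeds $2^{-\alpha m}$. Consequently $\Pr[|(A_b)_T| \geq 2^{\ell - 1}] < 1 - 2^{-\alpha m}$ for every $\ell \geq 2$ and large $m$, forcing $\mathcal{R}(A_b) \leq 1 = O(1)$.

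The only step requiring any computation is the hypergeometric estimate, but since $\ell$ and $b$ are both treated as constants and the target lower bound $2^{-\alpha m}$ is exponentially small, this is entirely routine and not a real obstacle. The proposition is essentially an immediate consequence of the very rigid structure of $A_b$ (a single axis-aligned subcube whose fixed prefix occupies a constant fraction of the coordinates).
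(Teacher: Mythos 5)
Your proof is correct and follows essentially the same route as the paper: $A_b$ is itself a single subcube, so $\mathcal{P}(A_b)=1$, and for any $T$ with $|T\cap[m/b]|=k$ one has $|(A_b)_T|=2^{\ell-k}$, so the condition reduces to $k\le 1$ and fails with probability $\Omega(1/b^2)$, which for constant $b$ dominates $2^{-\alpha m}$. Your write-up is actually more explicit than the paper's one-line estimate, which just asserts the needed bound without the hypergeometric reformulation.

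One small point worth tightening. Because $\mathcal{R}_\epsilon$ is defined as the \emph{largest} $\ell$ for which the projection condition holds with probability $\ge 1-\epsilon$, you need the condition to fail for \emph{every} $\ell\ge 2$ at the same $m$; your hypergeometric-to-binomial convergence statement is pointwise in $\ell$ (a fixed $\ell$ with $m\to\infty$), so the ``for every $\ell\ge 2$ and large $m$'' conclusion is not literally what the convergence gives. This is easy to repair and the paper elides the same point: observe that $\Pr_T\bigl[|T\cap S|\ge 2\bigr]$ is nondecreasing in $\ell$ (couple $T_\ell\subseteq T_{\ell+1}$ by deleting a uniformly random element of $T_{\ell+1}$), so it suffices to lower-bound the $\ell=2$ case directly, e.g.\ by $\binom{m/b}{2}\big/\binom{m}{2}=\Omega(1/b^2)$.
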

    \begin{proof}
    The subcube partition complexity $\mathcal{P}(A_b)$ is $1$ as the entire set is contained in the subcube $C$ with the first $m/b$ coordinates set to $0$ and it cannot be any smaller. 
    
    We need an upper bound on the random projection complexity $\mathcal{R}(A_b)$. Towards this, let $\ell$ be the smallest value for which $\Pr_{\substack{ T \subseteq [m] \\ |T| = \ell}} \left[ \left| (A_b)_T \right| \geq 2^{\ell-1} \right] < 1-\epsilon$, where $\epsilon$ is a small constant. It can be seen that this probability is lower bounded by $\alpha^{\ell}$ for a constant $\alpha$ that depends on $b$. Hence for a constant $b$, $\ell$ is bounded by a constant.
    \qed \end{proof}

We now establish a lower bound for the measure for another set $A$ which we use as a catalytic set later. We quickly recall linear codes, and related parameters below.


    A linear code over a $q$-ary alphabet of length $m$ and dimension $k$ is a linear subspace $C$ with dimension $k$ of the vector space $\mathbb{F}_{q}^{m}$. The distance $d$ of a linear code $C$ is the minimum Hamming distance between any two codewords in $C$, where Hamming distance between two codewords is the number of locations where they differ. Furthermore, $C$ is said to be an $[m,k,d]_q$ code if it has length $m$, dimension $k$, distance $d$ and alphabet size $q$. The relative distance of a  $[m,k,d]_q$ code, $\delta$ is defined as $\delta = \frac{d}{m}$. The covering radius of a code $C$ is the minimum $D$ such that for all $w \in \F_q^m$ there exists a codeword $c \in C$ such that $d(c,w) \le D$. We will have the following bounds on the measure.

        
    \begin{proposition}       \label{lem:codes}
        If $A$ is a set of codewords for an $[m, k, \delta m]_2$ code with $\delta$ being a constant, then $\mathcal{R}_\epsilon (A) \ge k$ for $\epsilon=2^{-2k}$.
    \end{proposition}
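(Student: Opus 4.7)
The plan is to take $\ell = k$ and show $\Pr_{|T|=k}[|A_T| \ge 2^{k-1}] \ge 1 - 2^{-2k}$, which immediately gives $\calR_{2^{-2k}}(A) \ge k$. The key observation is that since $A$ is the set of codewords of a linear $[m, k, \delta m]_2$ code, the projection $\pi_T : A \to \{0,1\}^T$ sending $c \mapsto c|_T$ is $\F_2$-linear, so $|A_T| = 2^{k - \dim \ker \pi_T}$. In particular $|A_T| \ge 2^{k-1}$ iff $\dim \ker \pi_T \le 1$, and this fails precisely when some 2-dimensional subcode $V \le A$ is entirely contained in $\{c : c|_T = 0\}$.

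Next, I would bound, for any fixed 2-dim subspace $V = \{0, v_1, v_2, v_1+v_2\}$ of $A$, the probability that $T$ lies in its joint zero set. All three of $v_1, v_2, v_1+v_2$ are nonzero codewords of weight at least $\delta m$, and the elementary identity $\text{wt}(v_1) + \text{wt}(v_2) + \text{wt}(v_1+v_2) = 2|\text{supp}(v_1) \cup \text{supp}(v_2)|$ gives $|\text{supp}(v_1) \cup \text{supp}(v_2)| \ge \tfrac{3\delta m}{2}$, so the joint zero set has size at most $(1 - 3\delta/2)m$. Hence a random $k$-subset $T$ lies in it with probability at most $\binom{(1-3\delta/2)m}{k}/\binom{m}{k} \le (1-3\delta/2)^k$. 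A union bound over the at most $2^{2k}$ two-dim subspaces of $A$ then yields
\begin{equation*}
\Pr[|A_T| < 2^{k-1}] \;\le\; \bigl(4(1-3\delta/2)\bigr)^k,
\end{equation*}
which is $\le 2^{-2k}$ as soon as $4(1-3\delta/2) \le 1/4$, i.e., once $\delta$ is a sufficiently large constant (for example, $\delta \ge 5/8$).

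The main obstacle I expect is handling arbitrary constant $\delta$: the union bound above is loose, and covering smaller $\delta$ would require either a second-moment strengthening (exploiting that the kernel dimension is an integer, so the bad event is $\dim \ker \pi_T \ge 2$ rather than $\ge 1$, which kills many of the cross terms), or more refined weight-enumerator information specific to the code. For the Spielman-type codes targeted elsewhere in this paper, however, one is free to pick the relative distance large enough that the direct union-bound argument above goes through.
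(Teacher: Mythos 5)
Your route is genuinely different from the paper's, and in one respect sharper. The paper makes no use of linearity: it fixes an arbitrary subset $S$ of $2^{k-1}$ messages, union-bounds over pairs $x\neq y$ in $S$ to argue that with probability at least $1-\epsilon$ no two codewords $E(x)$, $E(y)$ collide on the random coordinate set $T$, and concludes $|A_T|\ge |S|$. You instead observe that the projection $\pi_T$ is $\F_2$-linear, so $|A_T| = 2^{k-\dim\ker\pi_T}$, correctly isolate the bad event as $\dim\ker\pi_T\ge 2$ (the case $\dim\ker\pi_T=1$ is harmless --- a distinction the paper's pair-collision bound discards), and union-bound over $2$-dimensional subcodes via the support-union identity, which gives the stronger base $(1-3\delta/2)$ in place of the paper's $(1-\delta)$.

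Nonetheless there is a real gap, which you flag and then wave away incorrectly in the last sentence. Your bound closes only when $4(1-3\delta/2)\le 1/4$, i.e.\ $\delta\ge 5/8$. But for a binary $[m,k,\delta m]_2$ code with growing $k$ the Plotkin bound forces $\delta<1/2$, so the set of admissible $\delta$ is empty; in particular the constant-rate Spielman codes invoked elsewhere in the paper have relative distance well below $1/2$, so one is \emph{not} free to ``pick the relative distance large enough.'' Proving the statement for arbitrary constant $\delta$ would require going beyond the naive union bound over $2$-dimensional subcodes --- a second-moment bound on $\dim\ker\pi_T$, or weight-enumerator information specific to the family, as you yourself suggest. (For transparency, the difficulty is not particular to your decomposition: the paper's own step $\ell \ge \frac{2k-\log(1/\epsilon)}{\log(1/(1-\delta))}$ carries a sign error on the $\log(1/\epsilon)$ term, and once corrected the paper's pair-union bound imposes an even larger lower bound on $\delta$ than yours.)
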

       The proof of the above Lemma is a standard application of codes. 
       We reproduce the argument in Appendix~\ref{app:lem:codes} for completeness.
       
       \begin{aproof}{Proof of Lemma~\ref{lem:codes}}{app:lem:codes}
        





        Let $E: \{0,1\}^k \rightarrow \{0,1\}^m$ be the encoding associated with the given $C=(m, k, \delta m)_2$ code. Let $\ell$ be the random projection complexity of the set of codewords. We show that when $\epsilon=1/2^{2k}$, $\calR_{\epsilon}(C) \ge k$.
        
        Consider any set $S \subseteq \{0,1\}^k$ such that $|S| \ge 2^{\ell-1}$.  We will fix $S$ later. Firstly notice that, 
        \[\Pr_{\substack{T \subseteq [m]\\ |T|=\ell}} \big[ \forall x \neq y \in S \textrm{ such that } E(x)_T \neq E(y)_T \big ] \leq  \Pr_{\substack{T \subseteq [m]\\ |T|=\ell}} \big [ \big |A|_T \cap \{0,1\}^\ell \big | \geq 2^{\ell-1} \big ]\]
        Our goal is to show a lower bound of $1-\epsilon$ for the term in the left-hand size. Instead, we start by analyzing the complementary event and show that its probability is upper bounded by $\epsilon$. For any distinct pair $x,y \in S$, 
        \[\Pr_{\substack{T \subseteq [m]\\ |T|=\ell}} \big[ E(x)_T = E(y)_T \big ] \leq (1 - \delta)^\ell\]

        The above follows from the fact that since $A$ is a code of distance $\delta m$, the probability for $E(x)$ and $E(y)$ to be the same at a random index in $T$ is at most $1-\delta$. Thus, we have, 
        \[\Pr_{\substack{T \subseteq [m]\\ |T|=\ell}} \big[ \exists x \neq y \in S \textrm{ such that } E(x)_T = E(y)_T \big ] \leq (1 - \delta)^\ell {\binom{|S|}{ 2}}\]

        Choosing $S$ to be any subset of $\{0,1\}^k$ of size $2^{k-1}$, we want $(1-\delta)^\ell \binom{2^{k-1}}{2} \le \epsilon$. This means that 
        \[ \ell \ge \frac{2k-\log (1/\epsilon)}{\log (1/(1-\delta))}\]

        In addition, since $|S| \ge 2^{\ell-1}$, we have $k \ge \ell$. For our choice of $\epsilon$ all values of $\ell$ up to $k$ are feasible. Since we need the maximum possible $\ell$, we choose $\ell=k$. This completes the proof.
    \end{aproof}
    
	\subsection{A Lower Bound on the partition complexity for Union of Hamming Balls} \label{ssec:lb-hamming-balls}
    As a part of showing improved partition complexity lower bound in \cref{thm:code-improved-catalytic}, in this section, we outline the tools and ideas from the area of Fourier representation of Boolean functions that we used. The reader is referred to \cite{AOBF} for a comprehensive background on this subject.

    For two strings, $x,y \in \{0,1\}^m$, the Hamming distance, denoted by $\Delta(x,y)$, is the number of locations in which $x$ and $y$ differ. The same definition can be extended to subsets as follows: for any $A,B \subseteq \{0,1\}^m$, $\Delta(A,B) = \min\{\Delta(a,b) \mid a \in A, b \in B \}$. A set $H \subseteq \{0,1\}^m$ is said to be a Hamming ball if and only if there exists a $k \ge 0$ and a $z \in \{0,1\}^n$ such that for every $h \in H$, $\Delta(h,z) \le k$. We call $k$ as the \textit{radius} of the Hamming ball $H$ and $z$ to be its \textit{center}.
    
    The catalytic set considered in \cref{thm:code-improved-catalytic} is a union of Hamming balls centered on logspace decodable codewords. As a first step, we show that the partition complexity of this set is precisely the sum of partition complexity of the individual Hamming balls.
    
    \begin{proposition}(See \cref{prop:partition-complexity-balls-union}, \cref{app:sec:lbunion})
        Let $A \subseteq \{0,1\}^*$ be such that for any $m \ge 1$, $A_m := A \cap \{0,1\}^m$ can be expressed as a union of Hamming balls $H_1, H_2, \ldots, H_t$ over $\{0,1\}^m$ such that for any $i\ne j$, $\Delta(H_i, H_j) > 1$. Then, $\calP(A_m) = \sum_{i=1}^t \calP(H_i)$.
    \end{proposition} 
    
    Define the Boolean function $Th_{m,k} \colon \{0,1\}^m   \to \{-1,1\}$ as for any $x \in \{0,1\}^m, Th_{m,k}(x) = -1$ if $|x| \le k$ and $1$ if $|x| > k$.  
 
    Now, it remains to compute the partition complexity of a Hamming ball. The starting observation is that a Hamming ball of radius $k$ centered at $0^m$  is precisely the set of inputs on which the threshold Boolean function $Th_{m,k}$ evaluates to $-1$. The next observation, due to \cite{CKLS16} (Lemma 3.8), is that the partition complexity of a set viewed as a Boolean function is lower bounded by the sum of absolute values of its Fourier coefficients.

    In \cref{prop:fc-threshold} and \cref{prop:fc-values-more-than-k} (both appearing in \cref{app:sec:lbunion}), we obtain closed-form expressions for Fourier coefficients of $Th_{n,k}$. Using this, we show the following lower bound on the partition complexity of a Hamming ball.
    
    \begin{proposition}(See \cref{prop:partition-complexity-ball}, \cref{app:sec:lbunion})
        Let $H$ be a Hamming ball over $\{0,1\}^m$ of radius $k < m/2-\sqrt{m}$ centered at $0^m$. Then $\calP(H) = \Omega(k)$.
    \end{proposition}

    The main lemma that is used in~\cref{sec:limalmostcatalytic} for arguing the improved bound on subcube complexity in \cref{thm:code-improved-catalytic} is the following. 

    \begin{lemma}(See \cref{lem:partition-complexity-ball-main}, \cref{app:sec:lbunion})
        Let $A \subseteq \{0,1\}^*$ such that for every $m \ge 1$, $A_m$ is a disjoint union of Hamming balls $H_1, \ldots, H_t$ of radius $k<m/2-\sqrt{m}$ over $\{0,1\}^m$ such that for every $i,j \in [t]$, $\Delta(H_i, H_j) > 1$. Then for every $m\ge 1$, $\calP(A_m) = \Omega(tk)$.
    \end{lemma}

    Due to space constraints, detailed proofs of these statements are moved to \cref{app:sec:lbunion}.
    
\section{Almost-catalytic Turing Machines}\label{sec:almostcatalytic}

In this section, we present the definition and our results on Almost-catalytic Turing machines. We begin with the following definition.

   \begin{definition}[{\bf Almost-catalytic Computation with respect to $A$ : $\ACSPACE_A$ and $\ACL(A)$}] \label{def:almost-cat-space}
    Let $A \subseteq \Sigma^*$, a language $L$ is said to be in the class $\ACSPACE_A(s(n), c(n))$ if there is a Turing machine $M$ which on inputs of length $n$ uses a work tape of size $s(n)$ and catalytic tape of size $c(n)$ (over an alphabet set of size $2$) such that, (1)~for all $x \in \Sigma^*$, $x \in L$ if and only if the Turing machine $M$ accepts $x$. (2)~for all $w \in A$, if the machine $M$ starts the computation with content of the catalytic tape as $w$, then at the end of the computation $w$ will be restored back in the tape. For all $w \not \in A$, the algorithm need not restore the catalytic tape.
    
    Furthermore we define $\ACL(A)$ to denote the class $\ACSPACE_A(O(\log n), O(n^c))$ for some constant $c$.
    \end{definition}

We  make some preliminary observations about almost-catalytic computation.
Indeed, by definition, $\CL = \ACL(\Sigma^*)$, and $\PSPACE = \ACL(\emptyset)$. In general, for any $A \subseteq \Sigma^*$, $\CL \subseteq \ACL(A) \subseteq \PSPACE$. Moreover, there are languages $A \subsetneq \Sigma^*$  for which the $\ACL(A)$ can simulate the whole of $\PSPACE$.
The following proposition is also easy to see.
\begin{proposition} 
If $A = \{ 1^n \mid n \geq 0 \}$, then $\PSPACE = \ACL(A)$    
\end{proposition}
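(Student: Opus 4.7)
The inclusion $\ACL(A) \subseteq \PSPACE$ is immediate from the general observation already noted in the paper, so the only content is to prove $\PSPACE \subseteq \ACL(A)$. The crucial structural feature to exploit is that for every length $m$, the set $A \cap \{0,1\}^m$ is either empty or contains exactly one string, namely $1^m$. In particular, the catalytic tape content $w$ lies in $A$ iff every cell contains $1$, and this can be detected in logarithmic work space by a left-to-right scan using an $O(\log n)$ counter.

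The plan is as follows. Fix any $L \in \PSPACE$ decided by some deterministic Turing machine $M$ running in space $n^k$. I would construct an almost-catalytic machine $M'$ with work space $O(\log n)$ and catalytic space $c(n) = n^c$ for a constant $c \ge k$. On input $x$ of length $n$, $M'$ first performs a single scan of the catalytic tape, using an $O(\log n)$ counter, to test whether some cell holds $0$. If yes, then $w \notin A$ and there is no restoration obligation, so $M'$ just treats the catalytic tape as an ordinary work tape and simulates $M$ within its $n^k \le n^c$ cells, outputting $M$'s verdict. If no $0$ is found, then $w = 1^{c(n)}$ is known exactly; $M'$ again simulates $M$ on $x$ using the catalytic tape as scratch, and upon $M$ halting, performs one final sweep overwriting every catalytic cell with $1$, thereby restoring the initial content before halting with $M$'s verdict.

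Correctness on every input $x$ and every initial tape $w$ is immediate: either way $M'$ faithfully simulates $M$, so $M'$ accepts iff $x \in L$; and whenever $w \in A$ the second case applies and restoration is guaranteed by the final sweep. The resource bounds also hold: the scan, the simulation of $M$ (which needs only a fixed constant number of pointers into $\le n^k$ cells), and the final restoration all require only $O(\log n)$ bits of work space, and the catalytic tape stays within $n^c$ cells. There is no real conceptual obstacle here; the entire argument rests on the fact that $|A \cap \{0,1\}^m| \le 1$, which makes membership in $A$ testable in logspace and the unique string in $A$ reconstructible from nothing. This is exactly the feature that fails for larger catalytic sets, and illustrates why cardinality alone is a poor parameter for measuring progress, as the authors observe just before Theorem~\ref{thm:ACL-ZPP-L}.
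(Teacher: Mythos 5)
Your argument is correct and rests on exactly the same observation as the paper: the only string of each length that must be restored is $1^m$, so restoration amounts to blindly writing $1$s. The paper states it even more tersely by skipping your initial membership scan -- since overwriting the catalytic tape with $1^{c(n)}$ at the end is harmless when $w \notin A$ (no obligation) and correct when $w \in A$, one can simply always do the final sweep without first testing whether $w \in A$. Your version works too; the extra scan is just unnecessary.
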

The above proposition is true since the catalytic tape can be filled with $1^n$ at the end of the computation irrespective of the original content. However, it is a challenge to show the above for an arbitrary singleton set $A$. 

    A natural question is about the density of the catalytic set. We establish that for every $k$, there are sets with high density with respect to which every language in $\DSPACE(n^k)$ admits almost-catalytic algorithms.
    \begin{proposition}\label{thm:code-improved-catalytic-reviewer}
        For any $k \ge 1$, there exists a language $A \subseteq \{0,1\}^*$ with $\DSPACE(n^k) \subseteq \ACL(A)$ via an almost-catalytic logspace machine using $m=bn^k$ catalytic space for some constant $b \ge 1$, such that for any $m \ge 1$, $|A \cap \{0,1\}^m| \ge 2^{m-m/b}$.
    \end{proposition}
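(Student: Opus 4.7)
The natural candidate for the catalytic set is the family $A_b$ already introduced in Proposition~\ref{prop:measures-for-reviewers-A}: at each length $m$, the set of strings of the form $0^{m/b}\{0,1\}^{m-m/b}$. By construction $|A_b \cap \{0,1\}^m| = 2^{m - m/b}$, which matches the claimed density. The intuition is that membership in $A_b$ \emph{guarantees} a reserved block of $m/b$ zero bits at the start of the catalytic tape, which can then be used as free scratch space without violating restoration.

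Concretely, let $L \in \DSPACE(n^k)$ and let $M$ be a deterministic $n^k$-space Turing machine deciding $L$. Fix a constant $b \ge 1$ and set $m = bn^k$, so $m/b = n^k$. I would construct an almost-catalytic log-space machine $M'$ that, on input $x$ of length $n$ with catalytic content $w$, does the following:
\begin{enumerate}
\item Unconditionally overwrite the first $n^k$ cells of the catalytic tape with $0^{n^k}$.
\item Simulate $M$ on $x$, treating these $n^k$ cells as $M$'s work tape, while keeping only the head position, state of $M$, and a pointer into the prefix on the $O(\log n)$-bit work tape.
\item Save the accept/reject bit of $M$ in the work tape.
\item Overwrite the first $n^k$ cells of the catalytic tape again with $0^{n^k}$.
\item Accept/reject according to the stored bit.
\end{enumerate}

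Correctness on $L$ is immediate from step~1: since the simulated work tape is zeroed before step~2, $M$'s computation is faithfully simulated regardless of the original content $w$. For the restoration requirement, consider any $w \in A_b$; then $w = 0^{n^k} u$ for some $u \in \{0,1\}^{m-n^k}$. Step~1 is a no-op on such $w$; step~2 only modifies cells in the first $n^k$ positions, leaving $u$ intact; step~4 restores the prefix to $0^{n^k}$. Hence the final tape content equals $w$, as required. For $w \notin A_b$ there is no restoration obligation, so there is nothing to verify. The resource bounds are clear: $O(\log n^k) = O(\log n)$ bits of work tape suffice to index positions in the prefix and to carry the state information for $M$, and the total catalytic space is exactly $m = bn^k$.

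The only point that requires a little care is the decision to zero the prefix \emph{unconditionally} in step~1 rather than first testing whether $w \in A_b$ and branching. Testing membership in $A_b$ is trivial in log-space, but unnecessary: if $w \in A_b$ the prefix is already zero and step~1 is inert, while if $w \notin A_b$ overwriting is harmless because restoration is not required. I do not foresee a substantive obstacle; the argument is essentially a direct exploitation of the reserved-zeros structure of $A_b$, and the main content is verifying that restoration really holds on all of $A_b$ without any case analysis inside the algorithm.
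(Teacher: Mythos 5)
Your proof is correct and follows essentially the same route as the paper: take $A$ to be the set of strings with a zero prefix of length $m/b$, simulate $M$ inside that prefix, and restore by re-zeroing it. Your step~1 (unconditionally zeroing the prefix before simulation) is a small but welcome clarification that the paper leaves implicit, and you also avoid a harmless typo in the paper's version which says to zero the first $m$ bits rather than the first $n^k$.
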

    \begin{proof}
        Consider a language $L$ that can be decided by a machine $M$ in $n^k$ space. Now, we shall construct an almost-catalytic Turing machine $M'$ deciding $L$ using $m = b n^k$ catalytic space for some $b \geq 1$. We shall define the catalytic set $A$ used by $M$' as $\{w \mid w \text{ is of the form } 0^{n^k}(0+1)^{(b-1)n^k} \mid n \ge 1\}$.

        The machine $M'$ works as follows: Simulate $M$ on input $x$ using the first $n^k$ many bits of the catalytic tape. Now, for restoration, we set the first $m$ many bits back to $0$, which belongs to the set $A$. Finally, we observe that $|A \cap \{0,1\}^m| = 2^{(b-1)n^k} = 2^{m-m/b}$.
        \qed
    \end{proof}
 
	At the other extreme, if $|A\cap \{0,1\}^n| = \poly(n)$ i.e. $A$ is sparse, we ask the question if it is true that for all sparse $A$, $\ACL(A) = \PSPACE$? We observe that the answer is affirmative when the sparse set $A$ under consideration is in $\L$. 
	
	\begin{proposition}\label{prop:ACA-sparse-PSPACE}
		Let $A \subseteq \Sigma^*$ be a language in $\L$. Then if $A$ is sparse then $\ACL(A) = \PSPACE$.
	\end{proposition}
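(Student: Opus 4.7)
The plan is to simulate an arbitrary language $L \in \PSPACE$ by an almost-catalytic logspace machine $M'$ with catalytic set $A$, exploiting the two properties of $A$ in complementary ways: sparsity gives a compact $O(\log n)$-bit handle for any $w \in A_m := A \cap \{0,1\}^m$, while $A \in \L$ lets that handle be read off by a membership test executed directly on the catalytic tape. Fix $L \in \DSPACE(n^k)$ decided by some deterministic TM $M$, and let $M'$ use a catalytic tape of size $m = n^c$ for some $c \ge k$.

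First, I would run the logspace decider for $A$, treating the catalytic tape as its read-only input, to determine whether the initial content $w$ lies in $A$. If $w \notin A$ no restoration is required, so $M'$ simulates $M$ on $x$ freely using the catalytic tape as its work space and halts. The substantive case is $w \in A$.

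In that case the idea is to ``remember'' $w$ only by its lexicographic rank in $A_m$. Phase (a): starting from the catalytic contents $w$, repeatedly decrement the tape in place and, after each step, invoke the $\L$-decider for $A$ on the updated string, incrementing a work-tape counter whenever membership holds; stop once the tape has become $0^m$. The final counter value is $\mathrm{rank}(w) := |\{s \in A_m : s \le_{\mathrm{lex}} w\}|$. Phase (b): with the catalytic tape now in the known state $0^m$, simulate $M$ on $x$ using the catalytic tape as its work space. Phase (c): overwrite the catalytic tape with $0^m$ again, then repeatedly increment it in place and query $A$, halting the instant the running count of $A$-members reaches $\mathrm{rank}(w)$; by construction the tape then holds $w$.

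The main point to verify is that every operation stays inside $O(\log n)$ work space. Increment, decrement, and the $s = 0^m$ check each need only an $O(\log m) = O(\log n)$-bit head pointer; the $A$-membership test uses $O(\log m)$ bits because $A \in \L$; and the rank counter fits in $O(\log n)$ bits precisely because $A$ is sparse, so $|A_m| \le m^{O(1)} = n^{O(1)}$. The subtle point I would stress, which I expect to be the main obstacle, is that the $m$-bit counter used for enumeration cannot live on the work tape and must be hosted on the catalytic tape itself; this is safe only because the three phases are arranged so that each begins with an explicitly known tape content ($w$, $0^m$, and $0^m$ respectively), and $w$ is recoverable at the end purely from $\mathrm{rank}(w)$ together with the ability to re-enumerate $A_m$. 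Combined with the already-noted containment $\ACL(A) \subseteq \PSPACE$, this gives $\ACL(A) = \PSPACE$.
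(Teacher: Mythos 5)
Your proof is correct and follows essentially the same approach as the paper's: use the logspace decider to test $w \in A$, compute the lexicographic rank of $w$ in $A_m$ (which fits in $O(\log n)$ bits by sparsity) by decrementing the catalytic tape to $0^m$, simulate the $\PSPACE$ machine on the freed catalytic space, and restore $w$ by re-enumerating $A_m$ up to the stored rank. Your version is in fact slightly more explicit than the paper's in noting that the tape must be reset to $0^m$ before the re-enumeration phase, a step the paper's pseudocode leaves implicit.
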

	\begin{aproof}{Proof of Proposition \ref{prop:ACA-sparse-PSPACE}}{app:prop:ACA-sparse-PSPACE}
		Let $L \in \PSPACE$ via a Turing machine $M_L$. We want to show that we can construct a $\ACL(A)$ machine $M'$ such that it decides $L$. Say $A \in \L$ via the machine $M_A$. Following is the description of $M'$ (Algorithm \ref{log_sparse}) with catalytic tape initialized with $w \in \{0,1\}^{\poly(n)}$:
		
		\begin{algorithm}[htp!]
			\begin{algorithmic}[1]
				\caption{Machine $M'$ on $x \in \{0,1\}^n$ and $w \in \{0,1\}^{\poly(n)}$}\label{log_sparse}
				\STATE Check if $w \in A$ using the work space to run the machine $M_A$. If not, run the machine $M_L$ on the catalytic space. Accept if $M_L$ accepts, Rejects if $M_L$ rejects.
				\STATE Initialize $count=0$
				\REPEAT 
				\STATE Run machine $M_A$ on $w$:
				\IF {$w \in A$}
				\STATE Increment $count$. 
				\STATE Update $w = w - 1$. \hfill 
				\ENDIF
				\UNTIL {$w$ becomes all $0$'s}
				\STATE Run $M_L$ on catalytic tape. If $M_L$ accepts, set $flag = true$, else $flag = false$
				\REPEAT 
                \STATE Run machine $M_A$ on $w$:
				\IF {$w \in A$}
				\STATE Decrement $count$.
				\STATE Update $w=w+1$
				\ENDIF
				\UNTIL $count=0$
				
				\STATE If $flag=true$ then Accept, otherwise Reject and halt.
			\end{algorithmic}
		\end{algorithm}
		
		    Because $A$ is in $\L$, we can compute the membership of $w$ in $A$ using only the work space, which is logarithmic in size. If $w \not\in A$, it is not essential to restore the catalytic tape hence we simply run the machine $M_L$ on the catalytic tape without restoring $w$. Next, if $A$ is sparse there are only $\poly(n)$ many strings that the machine $M_A$ would accept. A counter that remembers the position of such a string in a lexicographically ordered $A$, would need only $O(\log n)$ many bits for its storage. So if $w \in A$, we start ``decrementing" the string while incrementing the counter, until when $w$ becomes all $0$'s, $count$ stores exactly the position of the string in a lexicographically ordered $A$. Finally, we can simply run the machine $M_L$ on the catalytic tape, and knowing the position of $w$ (in the lexicographically ordered $A$) stored by $count$ helps us restore $w$ at the end.
	\end{aproof}

    The proof for the above theorem can be found in Appendix \ref{app:prop:ACA-sparse-PSPACE}.
    We now show the following proposition for $\PARITY$ which is logspace decidable but is not sparse.
    \begin{proposition}\label{prop:parity-almost-paritybar}
    $\ACL(\PARITY) = \ACL(\overline{\PARITY})$.    
    \end{proposition}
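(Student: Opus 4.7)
The plan is to show $\ACL(\PARITY) \subseteq \ACL(\overline{\PARITY})$; the reverse inclusion follows by an identical argument with the roles of even- and odd-weight strings swapped. The key observation is a simple parity-toggling trick: flipping any single bit of the catalytic tape exchanges membership in $\PARITY$ with membership in $\overline{\PARITY}$.

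Suppose $L \in \ACL(\PARITY)$ via an almost-catalytic Turing machine $M$ using $O(\log n)$ work space and $c(n) = n^{O(1)}$ catalytic space. I will construct an almost-catalytic machine $M'$ for $L$ with catalytic set $\overline{\PARITY}$ as follows. On input $x$ with initial catalytic content $w \in \{0,1\}^{c(n)}$, the machine $M'$ first moves its catalytic head to position $1$ and flips that bit, producing a string $w'$ with $|w'|_1 = |w|_1 \pm 1$. It then simulates $M$ on input $x$ with catalytic content $w'$. When $M$ halts, $M'$ again flips the first bit of the catalytic tape and outputs whatever $M$ output.

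Correctness of acceptance: since $M$ correctly decides $L$ for \emph{every} initial catalytic content, in particular it decides $L$ correctly on content $w'$; thus $M'$ accepts $x$ iff $x \in L$, regardless of $w$. Restoration with respect to $\overline{\PARITY}$: suppose $w \in \overline{\PARITY}$, so $w$ has even weight. Then $w'$ obtained by flipping one bit has odd weight, i.e., $w' \in \PARITY$, so by the almost-catalytic guarantee of $M$, the catalytic tape contains $w'$ when $M$ halts. The final bit-flip by $M'$ restores $w$. If $w \notin \overline{\PARITY}$, no restoration is required, and there is nothing to check. Finally, the overhead of locating position $1$ and flipping the bit uses only $O(1)$ additional work space (the head simply walks to the left end and back), so $M'$ still runs in logarithmic work space and polynomial catalytic space. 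Hence $L \in \ACL(\overline{\PARITY})$.

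There is no substantive obstacle here; the only thing to be careful about is ensuring that the single bit flipped belongs to a deterministic, input-independent location so that it is unambiguous which bit to flip back after $M$ halts, and that this bookkeeping adds no more than constant work space. Both are immediate from the construction above.
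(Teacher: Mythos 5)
Your proof is correct and uses essentially the same bit-flip trick as the paper. The only cosmetic difference is that the paper's $M'$ first tests whether $w \in \overline{\PARITY}$ before flipping, whereas you flip the first bit unconditionally; this is a harmless simplification since, as you note, no restoration guarantee is needed when $w \notin \overline{\PARITY}$.
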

    Along with~\cref{thm:ACL-ZPP-L}, this shows that it suffices to design almost-catalytic logspace algorithms for $A=\PARITY$ to show membership in $\CL$. 
\begin{proof}[Proof of~\cref{prop:parity-almost-paritybar}]
It suffices to show that if $L \in \ACL(\PARITY)$, then $L \in \ACL(\overline{\PARITY})$. Let $L \in \ACL(\PARITY)$ via an almost-catalytic machine $M$. Consider an almost-catalytic machine $M'$ which works by first checking if the catalytic content $w$ belongs to $\overline{\PARITY}$. If yes, it flips the first bit of the catalytic content (which makes the catalytic content to be in $\PARITY$), runs $M$, flips the first bit of catalytic tape and accepts iff $M$ accepts x. The simulation of $M$ will correctly decide $L$ and restore the catalytic content which is the same as $w$ except for the first bit. The final step of $M'$ will restore the first bit. Hence $M'$ restores all strings in $\overline{\PARITY}$ and accepts $L$.
\qed \end{proof}

It is important that the definition of almost-catalytic space (Definition~\ref{def:almost-cat-space}) uses a catalytic tape alphabet set of size $2$. A larger alphabet set can dramatically increase the power of almost-catalytic space. Suppose we let the almost-catalytic machine with catalytic alphabet over a larger $\Gamma$ with $ \{0,1\} \subsetneq \Gamma$ and make the machine restore any set  $A\subseteq \{0,1\}^*$. More precisely, let $\ACL^\Gamma (A)$ denote the languages decidable by almost-catalytic logspace machines working over the catalytic tape alphabet $\Gamma$ with $A \subseteq \Gamma^*$ as the catalytic set. Observe that for any $A \subseteq \Gamma^*$, $\ACL^{\Gamma}(A) \subseteq \PSPACE$. 

We now show that even if the catalytic tape alphabet has even a single symbol that is not included in the alphabet for the catalytic set, the almost-catalytic machine can simulate the whole of $\PSPACE$.

\begin{proposition}\label{prop:3-size-alpha}
    Let $\Sigma$ be an input alphabet set and $\Gamma$ be a catalytic tape alphabet with $|\Gamma|>|\Sigma|$. Then for any $A \subseteq \Sigma^*$, $\PSPACE = \ACL^\Gamma(A)$. In particular, for $\Sigma = \{0,1\}$ and any $\Gamma$ with $|\Gamma| \ge 3$, $\PSPACE = \ACL^\Gamma(\Sigma^*)$
\end{proposition}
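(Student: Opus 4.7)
The plan is to prove both inclusions. The containment $\ACL^\Gamma(A) \subseteq \PSPACE$ is immediate from the remark made in the paragraph preceding the statement (any almost-catalytic logspace machine with polynomially many tape cells can be simulated in polynomial space). For the nontrivial direction $\PSPACE \subseteq \ACL^\Gamma(A)$, I would explicitly simulate a $\PSPACE$ machine $M_L$ for $L$, say running in $\DSPACE(n^{c'})$, by an almost-catalytic machine $M'$ over $\Gamma$. Fix any $\sigma \in \Gamma \setminus \Sigma$; it exists because $|\Gamma| > |\Sigma|$. Write $q = |\Sigma|$.

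On input $x$ with catalytic content $w \in \Gamma^m$ where $m = n^c$, the machine $M'$ would first sweep the catalytic tape checking for any symbol in $\Gamma \setminus \Sigma$. If such a symbol occurs, then $w \notin A$ since $A \subseteq \Sigma^*$, so restoration is moot and $M'$ simply uses the catalytic tape as an ordinary work tape to run $M_L$ directly and outputs its answer. The interesting case is when every cell of $w$ lies in $\Sigma$, because then $w$ might still belong to $A$ and will have to be put back.

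For that case, I would partition the catalytic tape into blocks of a fixed constant size $k$ chosen so that $(q+1)^k \geq 2 q^k$ --- possible since $(1+1/q)^k \to \infty$ --- and fix an injection $\phi : \Sigma^k \times \{0,1\} \to \Gamma^k$ satisfying $\phi(s,0) = s$ and $\phi(s,1) \in \Gamma^k \setminus \Sigma^k$; the cardinality inequality $|\Gamma^k \setminus \Sigma^k| \geq q^k$ is exactly what makes this injection exist. Through $\phi$, each block simultaneously encodes its original $\Sigma^k$-content together with one free workspace bit; initially all extra bits are $0$ because $\phi(s,0)=s$. The $\lfloor m/k \rfloor = \Theta(m)$ extra bits across all blocks form a virtual work tape of polynomial size, on which $M'$ simulates $M_L$: every virtual read or write is performed by locating the owning block, decoding it via $\phi^{-1}$, updating the bit if needed, and re-encoding via $\phi$, using only $O(1)$ scratch since $k$ is a constant. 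When $M_L$ halts, $M'$ records its answer bit in its own work tape, then sweeps the catalytic tape once more and re-encodes every block with extra bit $0$; by $\phi(s,0)=s$ this puts each block back to its original $\Sigma^k$-content, and hence restores all of $w$. Finally $M'$ outputs the saved answer.

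The only genuine thing to check is that $M'$ operates in $O(\log n)$ work space: the block-locating counter, the simulated position of $M_L$'s head, and the simulated $M_L$ state together take $O(\log n)$ bits, the per-block decode and encode use $O(1)$ scratch because $k$ is a fixed constant, and choosing $c$ so that $\lfloor m/k \rfloor \geq n^{c'}$ gives $M'$ enough virtual workspace to host $M_L$. The main conceptual step --- really the only obstacle --- is to realize that the information advantage of $\log(|\Gamma|/|\Sigma|)$ bits per catalytic cell, however slim, can be aggregated by a constant-size block code into $\Theta(m)$ bits of genuine workspace, while the design choice $\phi(s,0) = s$ trivializes restoration on inputs $w \in A \subseteq \Sigma^*$.
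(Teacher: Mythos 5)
Your proof is correct, and it takes a genuinely different route from the paper's. The paper first reduces to $\Sigma = \{0,1\}$, then counts $0$s versus $1$s on the catalytic tape, picks the majority symbol (say $0$), and encodes each work-tape symbol of the simulated machine on a pair of catalytic cells that originally held $0$, using a two-cell code over $\{0, \widehat{0}\}$ for the extra symbol $\widehat{0} \in \Gamma \setminus \Sigma$; restoration replaces every $\widehat{0}$ by the majority symbol. Your approach instead aggregates the per-cell information surplus of $\log\bigl(|\Gamma|/|\Sigma|\bigr)$ bits into one spare bit per constant-size block, via an injection $\phi : \Sigma^k \times \{0,1\} \to \Gamma^k$ with $\phi(s,0)=s$. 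This is more uniform: there is no majority analysis, the blocks hosting the virtual tape are fixed contiguous regions known in advance, the construction applies directly to arbitrary $\Sigma$ without a preliminary binary re-encoding, and restoration is automatic because writing back $\phi(s,0)$ returns each block to its original content $s$. Both arguments extract $\Theta(m)$ bits of virtual work space from the catalytic tape. A small point in your favor: you sweep for \emph{any} symbol outside $\Sigma$ before concluding that $w$ could lie in $A$, which is the right test when $|\Gamma| > 3$; the paper's text checks only for occurrences of $\widehat{0}$, which is slightly imprecise for larger $\Gamma$.
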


\begin{proof}
    Without loss of generality, assume $\Sigma=\{0,1\}$ by suitably fixing a binary encoding for the input alphabets. Let $A\subseteq \Sigma^*$.  It suffices to show that $\PSPACE \subseteq \ACL^{\Gamma}(A)$. 

   Consider a language $L$ in $\PSPACE$ via a $p(n)$ space bounded deterministic Turing machine $M$ where $p(n)$ is a fixed polynomial in $n$. Also, without loss of generality, let the work tape of $M$ use the alphabet set $\{0,1,\text{\textvisiblespace}\}$.

   An almost-catalytic machine $M'$ using catalytic tape alphabet $\Gamma$ having $\{0,1,\widehat{0}\} \subseteq \Gamma$ accepting $L$ with a catalytic tape of length $4p(n)$ is described as follows: 
   Scan across the catalytic tape and check if the initial catalytic content $w$ contains a $\widehat{0}$ symbol. 
   If there is no occurrence of $\widehat{0}$, then $w$ can be a member of $A$ and in particular consists of $1$s and $0$s alone. Using the work tape, $M'$ counts the number of $0$s in $w$ denoted by $m$. 
   
   We now describe how $M'$ simulates $M$. Suppose that the number of $0$s is more than the number of $1$s. Then $m \ge \frac{1}{2} \times 4p(n) = 2p(n)$. The machine $M'$ uses the first $2p(n)$ cells out of the $m$ cells containing $0$ of the catalytic tape to simulate the workspace of $M$. Note that $M$ is over alphabet set $\{0,1,\text{\textvisiblespace}\}$ while the catalytic tape of $M'$ is over the alphabet set $\Gamma$. To handle the work tape symbols of $M$ correctly during the simulation, $M'$ uses the following encoding $E: \{0,1,\text{\textvisiblespace}\} \to \Gamma$ defined as $E(0) = 00$, $E(1) = 0\widehat{0}$ and $E(\text{\textvisiblespace}) = \widehat{0}0$. More precisely, if $M$ reads (or writes) a symbol $\alpha \in \{0,1,\text{\textvisiblespace}\}$ at position $i$ of its tape, $M'$ proceeds to read (or write) $E(\alpha)$ at the $2i$ and $2i+1$th cells having $0$ or $\widehat{0}$ counted from the left end on the catalytic tape. Once the computation ends, restoration of $w$ is achieved (irrespective of whether $w\in A$ or not) by replacing all the $\widehat{0}$ with $0$ at the end of the simulation. Now, if the number of $1$'s are more than the number of $0$s, then $M'$ uses an encoding $E(0) = 11$, $E(1) = 1\widehat{0}$ and $E(\text{\textvisiblespace})=\widehat{0}1$ and repeat the above simulation of $M$ with $0$ replaced by $1$ in the above text. 

    If $w$ contains a $\widehat{0}$, then $w \not\in A$ and therefore $M'$ is not required to restore the catalytic tape. In such a case, $M'$ erases the catalytic tape and simulates $M$ on it. 
    Clearly, if $M$ uses $\poly(n)$ workspace, $M'$ can simulate $M$ using $O(\log n)$ work space and  $4\cdot \poly(n)$ catalytic space. Hence $L \in \ACL^{\Gamma}(A)$ which completes the proof.
\qed \end{proof}

  
        \section{An Upper Bound on Almost-catalytic Computation}
        In this section, we show that for any language $A \subseteq \Sigma^*$, languages computable by almost-catalytic Turing machines with respect to $A$ which are also computable by catalytic Turing machines with respect to $\overline{A}$ are contained in $\ZPP$.

 	\begin{lemma}\label{cat-configs}
		Define for any almost-catalytic Turing machine M restricted to A, $C_t(x,w)$ to be the configuration with input $x$ and catalytic tape content $w$ at time $t$. For all $x$, for all $w,w' \in A$ such that $w \neq w'$ and for all $t,t'$ $C_t(x,w) \neq C_{t'}(x,w')$.
	\end{lemma}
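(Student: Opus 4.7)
The plan is to mimic the classical Buhrman--Cleve--Kouck{\'y}--Loff--Speelman argument that distinct initial catalytic contents lead to disjoint configuration trajectories, and observe that it generalizes verbatim provided both initial contents lie in the restoration set $A$.

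First, I would assume for contradiction that there exist $w \ne w'$ in $A$ and times $t, t'$ with $C_t(x,w) = C_{t'}(x,w')$. Since the machine $M$ is deterministic, the entire future computation is a function of the current configuration alone. Hence for every $k \ge 0$ we have $C_{t+k}(x,w) = C_{t'+k}(x,w')$, i.e., the two trajectories agree after the alignment point up to the constant time shift $t' - t$.

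Next, I would use the fact that an almost-catalytic machine always halts on every input and every initial catalytic content. Let $T$ be the halting time of $M$ on $(x,w)$ and $T'$ its halting time on $(x,w')$. From the synchronization above, if $T = t+k$ then the halting configuration on $(x,w')$ is reached at time $T' = t'+k$, and the two halting configurations are \emph{identical}. In particular, the catalytic tape contents at halting are equal.

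Finally, I would invoke the defining property of almost-catalytic computation restricted to $A$: since $w \in A$, the catalytic tape on $(x,w)$ must read $w$ at halting, and since $w' \in A$, the catalytic tape on $(x,w')$ must read $w'$ at halting. But we just argued these halting tapes coincide, forcing $w = w'$, a contradiction. The only step that requires a small amount of care is the alignment bookkeeping when $t \ne t'$ (and in particular that both computations really do halt, so that comparing final configurations is meaningful), but this is immediate from determinism and the halting guarantee in \cref{def:almost-cat-space}; there is no substantive obstacle beyond this.
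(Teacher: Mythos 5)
Your proof is correct and follows essentially the same approach as the paper's (which is itself a generalization of the argument in Buhrman et al.): assume two trajectories starting from distinct $w, w' \in A$ collide, note determinism forces the suffixes to coincide and hence the halting catalytic contents to be equal, and derive a contradiction with the restoration requirement for $A$. Your version simply spells out the time-alignment and halting bookkeeping that the paper's two-sentence proof leaves implicit.
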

	
	\begin{proof}
		Assume there exists an $x$ and there exists $w,w' \in A$ such that $C_t(x,w) = C_{t'}(x,w')$. Now, from that point onward the computation would be the same and the restoration part would be incorrect for one of $w$ or $w'$, a contradiction. This justifies our lemma.
	\qed \end{proof}
	
 \ACLZPPandL*
	
	\begin{proof}
        First, we argue that for any $A \subseteq \Sigma^* $, $\ACL(A) \cap \ACL(\overline{A}) \subseteq \ZPP$.        
		Let $L \in \ACL(A)$ via Turing machine $M_1$ and $L \in \ACL(\overline{A})$ via Turing machine $M_2$. Algorithm \ref{zpp_alg} describes the $\ZPP$ machine $M'$ for $L$.
		
		\begin{algorithm}
			\caption{\textbf{Description for Machine $M'$} on input $x$ and initial catalytic tape content $w$}\label{zpp_alg}
			\begin{algorithmic}[1]
				\STATE Choose a $w \in \{0,1\}^{\poly(n)}$ u.a.r.
                \STATE Perform steps (3) and (4) in a time shared fashion till one of them halts
				\STATE Run $M_1$ on $x$ with $w$ on a catalytic tape
				\STATE Run $M_2$ on $x$ with $w$ on a separate catalytic tape
                \STATE Accept if and only if the machine that halted accepted.
			\end{algorithmic}
		\end{algorithm}
		
		Correctness follows since $M'$ either simulates $M_1$ or $M_2$ both of which correctly accepts $L$.
  
		We now analyze the run time of $M'$ and show that it runs in expected polynomial time (w.r.t. $w$). Let $t(x,w)$ denote the total number of steps $M'$ makes on $x$ and $w$. Let $t_1(x,w)$ denote the running time of machine $M_1$ on input $w$ in Step 4. Let $t_2(x,w)$ denote the number of steps taken in Step 6. Observe that $t(x,w) = O(\min\{t_1(x,w), t_2(x,w)\})$. 	
		For a fixed $x$,  the expected running time (over the random choices of $w$) of $M'$ can be obtained as $\expt[t(x,w)]  = \expt[\min \{t_1(x,w), t_2(x,w)]$. We now bound the expectation. 
		\begin{align*}
			\expt[t(x,w)] & = \expt[t(x,w)|w \in A] \times \Pr[w \in A] + \expt[t(x,w)|w \in \overline{A}] \times \Pr[w \in \overline{A}] \\
            & \le  \expt[t_1(x,w)|w \in A] \times \Pr[w \in A] + \expt[t_2(x,w)|w \in \overline{A}] \times \Pr[w \in \overline{A}] \numberthis \label{eq:min-bound} \\
			& \le \frac{\sum_{w \in A}t_1(x,w)}{|A|}\times \frac{|A|}{2^{|w|}} + \frac{\sum_{w \in \overline{A}}t_2(x,w)}{|\overline{A}|}\times \frac{|\overline{A}|}{2^{|w|}}\\
			& \leq \frac{2^{|w|} \times n^c}{|A|} \times \frac{|A|}{2^{|w|}} + \frac{2^{|w|} \times n^c}{|\overline{A}|}\times \frac{|\overline{A}|}{2^{|w|}}  = O(n^c)\numberthis \label{eq:bound}
		\end{align*}
		Note that Eq.~\ref{eq:min-bound} follows as $t(x,w)$ is the minimum among $t_1(x,w)$ and $t_2(x,w)$ and Eq.~\ref{eq:bound} follows from Lemma \ref{cat-configs} where $c$ is some absolute constant. Thus it follows that $\expt[t(x,w)] \le \poly(n)$.  Hence, the overall running time of $M'$ will be polynomial on expectation. 
 		
		We now argue that for any $A \in \L$, $\ACL(A) \cap \ACL(\overline{A}) = \CL$.
		
		For any $L \in \CL$, $L \in \ACL(A) \cap \ACL(\overline{A})$ as any catalytic machine always restores the catalytic content (irrespective of the choice of $A$). On the other hand, suppose that $L \in \ACL(A) \cap \ACL(\overline{A})$ via an almost-catalytic machine $M_1$ with restoration for $A$ and via an almost logspace catalytic machine $M_2$ with restoration for $\overline{A}$. Since $A$ can be decided in logspace, the catalytic algorithm first checks if the catalytic content belongs to $A$ and runs $M_1$ and runs $M_2$ otherwise. The resulting machine is indeed catalytic as it restores irrespective of the catalytic content and uses only logarithmic work space. Hence $L \in \CL$.
	\qed \end{proof}

%


\section{Almost-catalytic Computation via Error Correcting Codes}
    \label{sec:almost-catalytic-basic}
	We observed for any $A \subseteq \Sigma^*$, $\ACL(A) \subseteq \PSPACE$. We now show that there exists $A \subseteq \Sigma^*$ such that $\PSPACE \subseteq \ACL(A)$. We prove this by showing that there exists an $A \subseteq \Sigma^*$ such that for any $k$ and for any $L \in \DSPACE(n^k)$, $L \in \ACL(A)$. This suffices since $\PSPACE = \bigcup_{k \ge 0} \DSPACE(n^k) \subseteq \ACL(A)$.
	
	Our intuition is the following : 
	any computation can be seen as ``corrupting'' the catalytic tape content making the restoration difficult. With this view, it is natural to set $A$ to be codewords from an error correcting code of good distance. In addition, the code should be decodable in $O(\log n)$ space. 
	In the following Theorem, we choose $A$ to be one such code. 
	
    \almostCatalytic*    
	
	\begin{proof}
		Fix any $k\ge 0$. Let $L \in \DSPACE(n^k)$ via a Turing machine $M$ using a work space of $c n^k$ for some constant $c >0$. The goal is to construct a catalytic logspace Turing machine $M'$ such that $L(M')=L$ and it always restores the catalytic content $w$ if $w \in A$. We choose our $A$ such that $A_n := A \cap \{0,1\}^n$ consists of codewords of an explicit $[n,\frac{n}{4},\alpha n]_2$ linear code constructed by Spielman~\cite{linearSpielman}. Here, $\alpha$ (a constant, independent of $n$) is the relative distance of the code (as described in Theorem 19 of~\cite{linearSpielman}). In their work, it was shown that these codes can be decoded in deterministic logspace. Let $D$ be such a logspace decoding machine.
  
		We now describe a machine $M'$ (shown in Algorithm \ref{code_alg}) accepting $L$. With $x \in \Sigma^*$ of length $n$, let the length of the catalytic tape be $bn^k$ where $b$ is at least $\frac{2c}{\alpha}$. We work with the set $A_{bn^k}$ (where $A_n$ is as defined above). Let $D$ be the logspace decoder, given access to a string of length $bn^k$, can correct it using $O(\log n)$ space provided the string is within the decoding limit of some codeword in $A_{bn^k}$.

        \begin{algorithm}
			\caption{\textbf{Description of $M'$} on input $x$ and initial catalytic tape content $w$}\label{code_alg}
			\begin{algorithmic}[1]
				\STATE On input $x$, run $M$ on $x$ using the first $cn^k$ cells of the catalytic tape as the work tape for $M$.
				\STATE  Using the work tape as the work space for $D$, decode the content of the catalytic tape.
                \STATE Accept if $M$ accepted $x$
                \STATE Else reject
			\end{algorithmic}
		\end{algorithm}

			
			
		
		Since $M'$ simulates $M$, $L(M')=L(M) = L$. Let $w$ be the initial content of the catalytic tape with $|w|=bn^k$. Let $w'$ be its content at the end of the computation in Step 1 of $M'$. Observe that $w$ and $w'$ can differ in at most $cn^k$ bits as $M$ uses only the first $cn^k$ bits of the catalytic tape. If $w \in A_{bn^k}$, then $w$ is a codeword and $w'$ must fall in the Hamming ball of radius $\frac{\alpha }{2}bn^k$ since $\Delta(w',w) \le cn^k \le \frac{\alpha }{2}bn^k$ by the choice of $b$. Hence upon running $D$ on $w'$ in Step 2 of $M'$, will restore the catalytic tape content to $w$. Observe that this step uses $O(\log n)$ work tape cells. Thus, the above arguments imply that $L \in \ACL(A)$ via the machine $M'$. 



        The lower bound on $\calR(A_k)$ follows \cref{lem:codes} since the set $A_k$ is exactly the set of codewords of Spielman codes that have $\delta = O(1)$ \cite{linearSpielman}. 
        The lower bound on $\calP(A_k)$ follows from the minimum distance of the set $A_k$ is $d > 1$, and hence no two elements of $A_k$ can be covered by the same subcube. Hence $\calP(A_k) \ge |A_k|$ which is at least $2^{m/4}$. 
	\qed \end{proof}
\vspace{-3mm}
    We remark that it also suffices if the length of the codewords is a polynomial in $n$ (message length) and has a good distance that is logspace constructible and decodable. In addition to the Speilman codes~\cite{linearSpielman}, logspace decodable codes from \cite{GK06} also suffice for the above theorem. 

    
\section{An Improvement on the Subcube Partition Complexity of the Catalytic Set}\label{sec:limalmostcatalytic}
In \cref{Spthm}, we showed that any $\PSPACE$ algorithm can be simulated in almost-catalytic logspace by restoring catalytic content $w$ that are codewords of a carefully defined code as the set $A$. The ideal case would be to cover every such $w$ that appears as catalytic content. With this motivation, in the main result of this section (\cref{thm:code-improved-catalytic}), we attempt to cover strings that are not codewords as well at the expense of using less space. This allows the set $A$ to be larger than the one in~\cref{Spthm} and also has a better subcube partition complexity. 


\codeCatalytic*
\begin{proof}
        Let $L \in \DSPACE(\log^k n)$ via a Turing machine $M$. Let $m=n^k$ and $C$ be an $[m,m/4,\alpha m]_2$ logspace decodable Spielman code where $\alpha>0$ is a constant~\cite{linearSpielman}. We crucially use the existence of the family of functions ($f_m$) (Theorem 19 of \cite{linearSpielman}) in our algorithm defined as:
	  $f_m\colon \{0,1\}^{m} \times \{0,1\}^{\log m} \rightarrow \{0,1\}$ is a Boolean      function that takes in word $w \in \{0,1\}^m$ such that $\exists y \in \{0,1\}^{m/4}$ with $d(C(y),w) \le \frac{\alpha m - 1}{2}$ and an index $j \in [m]$ and outputs 1 if and only if $w_j \ne C(y)_j$. If $w_j \ne C(y)_j$, then we shall denote them as corrupted bits/indices of $w$. Here, $d(\cdot,\cdot)$ denotes the Hamming distance between two binary strings.

      In addition, the function $f_m$ can be computed by a log-space uniform family of bounded fan-in log-depth polynomial size circuits. Note that these circuits can be evaluated in $O(\log m) = O(\log n)$ space. Hence for any given $w \in \{0,1\}^m$ and $j \in [m]$, $f(w,j)$ can be computed in $O(\log n)$ space.
      
    We define $T_m$ to be a subset of all strings that are uniquely decodable to some codeword in $C$. 
    More precisely, $T_m = \left \{z \in \{0,1\}^m \mid \exists y \in \{0,1\}^{m/4}, d(z,C(y)) \le \Delta \right \}$ where $\Delta:=\frac{m}{\log^kn} \times \frac{\log n}{\log(\log^k n)}$. Define $A_k = \bigcup_{m\ge 1} T_m$. 
      The description of an $\ACL$ machine $M'$ simulating $M$ is given in Algorithm~\ref{alg:logk}.
		
		\begin{algorithm}
			\begin{algorithmic}[1] 
				\STATE Partition $[m]$ into disjoint contiguous blocks $B_1, B_2, \ldots B_\ell$ each of size $b = \log^kn$.
				\STATE Using the function $f_m$, find the first $i \in [\ell]$ in $w$ such that 
                    $|\{ j \in B_i \mid f_m(w,j) = 1 \} | \le \frac{\log n}{\log(\log^kn)}$
                    \STATE If such an $i$ does not exist, set $i=\ell$, $E=\emptyset$, and go to line 6. \hfill //in this case $w \notin A_k$ 
				\STATE Store the start and end indices of the block $B_i$. Call them $p$ and $q$ respectively.
                    \STATE Let $E = \{ i \in [b] \mid f_m(w,p+i) = 1 \}$ be the corrupted bits of $B_i$.
				\STATE Run $M$ on $x$ using catalytic tape cells indexed by $B_i$ as work space for $M$ and accept if and only if $M$ accepts. 
                    \STATE {\bf Restoration:} Let $w'$ be the content of the catalytic tape at the end of the computation. \\ For each $j \in B_i$, if $[f_m(w',j) = 1)] \oplus [j \in E]$, flip $w'_j$.
				\caption{Description of $M'$ on input $x$ and catalytic tape content $w$ with $|w|=m$.}
                \label{alg:logk}
			\end{algorithmic}
		\end{algorithm}
				

    Firstly, we argue the correctness of the above algorithm. To see this, irrespective of whether $w \in A_k$ or not, steps 2 and 3 of \cref{alg:logk} will find a block $B_i$ of size $O(\log^k n)$ such that the cells of the catalytic tape indexed by $B_i$ will be used to correctly simulate $M$ on $x$ which requires only $O(\log^k n)$ space. 
    
    We argue now that the above algorithm restores $w$ at the end of the computation if $w \in A_k$. If $w \in A_k$,  then there is a codeword $\gamma \in \{0,1\}^m$ within a Hamming distance of $\Delta=\frac{m}{\log^kn} \times \frac{\log n}{\log(\log^k n)}$  from $w$. Since there are $\ell = \frac{m}{\log^k n}$ blocks, by averaging, there must be a block with at most $\frac{\log n}{\log(\log^k n)}$ errors. Let $B_i$ be the first such block. Note that step 2 of \cref{alg:logk} will indeed find such a $B_i$.
      		
        
       Since $|B_i| \le O(\log^k n)$, we still have that $d(w',\gamma) \le d(w,\gamma) + O(\log^k n) \le \frac{m}{\log^kn} \times \frac{\log n}{\log(\log^k n)} + O(\log^k n) \le \frac{\alpha m -1}{2}$ for large enough $n$. Hence the word $w'$ is still within the decoding radius of the codeword that we started with.
		
	   Recall that $w$ and $w'$ are content of the catalytic tape, respectively, at the beginning and end of step $6$. For $j \notin B_i$, step 6 does not change $w_j$ and hence $w'_j = w_j$. We show that the bits indexed by $B_i$ get restored. If $j \in B_i$, then $w_j$ may get changed during the simulation of the machine $M$ in step 6. Recall that the step $5$ computes the set of corrupted indices in block $B_i$ as the set $E$. For a $j \in B_i$, there are two cases:
       \begin{description}
           \item{\bf Case 1:} $w'_j = \gamma_j$. This implies that $j$-th bit in $w_j'$ is not corrupted. If in addition, $j \in E$, we have $w_j \ne \gamma_j$ which implies $w_j \ne w'_j$. Hence when the algorithm flips $w_j'$ in line 7, it makes it equal to $w_j$. For $j \notin E$, we have that $w_j = \gamma_j = w'_j$, and hence no flipping is required in line 7 of the algorithm.
           \item{\bf Case 2:} $w'_j \ne \gamma_j$. This implies that $j$-th bit in $w_j'$ is corrupted. If in addition, $j \not\in E$ we have $w_j = \gamma_j$ which implies $w_j \ne w'_j$. Hence when the algorithm flips $w_j'$ in line 7, it makes it equal to $w_j$. In case, $j \in E$, we have that $w_j \ne \gamma_j$ and hence $w_j = w'_j$. Hence no flipping is required in line 7 of the algorithm.
       \end{description}
 
       We now argue the space bound for $M'$. The indices $p, q \in [m]$ as well as $i, j \in [\ell]$ can be stored in $O(\log n)$ space. Note that we store $E \subseteq [b]$, taking $|E| \log b$ many bits. Since the number of corrupted bits $|E|$, is at most $\frac{\log n}{\log(\log^k n)}$, we can store $E$ in the work tape using $O(\log n)$ bits. As mentioned above, the function $f_m$ can also be computed in $O(\log n)$ space as needed in lines 2, 5 and 7 of the algorithm. This establishes the space bound.

        The lower bound on $\calR(A_k)$ follows \cref{lem:codes} since the set $A_k$ is exactly the set of codewords of Spielman codes that have $\delta = O(1)$ \cite{linearSpielman} and that $R_\epsilon(A)$ is a monotone property with respect to $A$. We now prove the improvement on $\calP(A_k)$.

The lower bound on $\calP(A_k)$ follows from a careful analysis of the Fourier coefficients of the Threshold function (See ~\cref{lem:partition-complexity-ball-main} presented in \cref{app:sec:lbunion}) noting that the set $A_k$ is defined to the union of Hamming balls of radius $\Delta=\frac{m}{(\log^{k-1}n)\log(\log^k n)}$ and that for large enough $m$, $\Delta < m/2-\sqrt{m}$.
	\qed \end{proof}

    \begin{appsection}{Omitted Proofs from \cref{ssec:lb-hamming-balls} : A Lower Bound on the partition complexity for Union of Hamming Balls}{app:sec:lbunion}
    This section details the proofs of statements given in \cref{ssec:lb-hamming-balls}. For convenience of the reader, some of the terminologies defined there are repeated in this section.
    
    For two strings, $x,y \in \{0,1\}^m$, the Hamming distance, is denoted by $\Delta(x,y)$. The same definition can be extended to subsets as follows: for any $A,B \subseteq \{0,1\}^m$, $\Delta(A,B) = \min\{\Delta(a,b) \mid a \in A, b \in B \}$.
    
    A set $H \subseteq \{0,1\}^m$ is said to be a Hamming ball if and only if there exists a $k \ge 0$ and a $z \in \{0,1\}^n$ such that for every $h \in H$, $\Delta(h,z) \le k$. We call $k$ as the \textit{radius} of the Hamming ball $H$ and $z$ to be its \textit{center}. 
    
    \begin{proposition}\label{prop:partition-complexity-balls-union}
        Let $A \subseteq \{0,1\}^*$ be such that for any $m \ge 1$, $A_m := A \cap \{0,1\}^m$ can be expressed as a union of Hamming balls $H_1, H_2, \ldots, H_t$ over $\{0,1\}^m$ such that for any $i\ne j$, $\Delta(H_i, H_j) > 1$. Then, $\calP(A_m) = \sum_{i=1}^t \calP(H_i)$.
    \end{proposition} 
    \begin{proof}
        Consider any partition of $A_m$ into $t$ subcubes given by $C_1, C_2, \ldots, C_t$. Suppose that there exists a subcube $C_k$, such that it contains points from $H_i$ and $H_j$ for some $i \ne j$. As it is a partition, every point in the subcube must belong to some Hamming ball and hence there exists two strings $x,y \in C_k$ that differ in one bit with $x \in H_i$ and $y \in H_j$. But this contradicts $\Delta(H_i,H_j) > 1$. Hence, each $C_i$ can contain at most one Hamming ball. With no sub cube partition of $A_m$ intersecting two Hamming balls, we conclude $\calP(A_m) \ge \sum_{i=1}^t \calP(H_i)$.
    
        On the other hand, since a sub cube partition of the Hamming balls gives a sub cube partition of $A_m$, $\calP(A_m) \le \sum_{i=1}^t \calP(H_i)$.
    \qed \end{proof}
    We now set up the necessary tools to obtain a lower bound on the subcube partition complexity of Hamming balls. \cite{CKLS16} showed that the partition complexity of a set viewed as a Boolean function is lower bounded by the sum of absolute values of its Fourier coefficients.

    Define the Boolean function $Th_{m,k} \colon \{0,1\}^m   \to \{-1,1\}$ as for any $x \in \{0,1\}^m, Th_{m,k}(x) = -1$ if $|x| \le k$ and $1$ if $|x| > k$.  
	 We start with the observation that a Hamming ball of radius $k$ centered at $0^m$  are precisely the set of inputs on which the threshold Boolean function $Th_{m,k}$ evaluates to $-1$. 
    
    In \cref{prop:fc-threshold} and \cref{prop:fc-values-more-than-k}, we obtain closed-form expressions for Fourier coefficients of $Th_{n,k}$. 

	\begin{proposition}
		For any $1 \le i \le k < n/2$ and $S \subseteq [n]$ with $|S|=i$,  \label{prop:fc-threshold} $\widehat{Th_{n,k}}(S) = 
                         \frac{1}{2^{n-1}} \binom{n-2i+1}{k-i}$  
        
	\end{proposition}
	\begin{proof}
		Since the Threshold function is symmetric, $\widehat{Th_{n,k}}(S)$ is the same for all $S \subseteq [n]$ of size $i$. Hence, without loss of generality, we assume that $S = \{1,2,\ldots,i\}$. 

		We argue this by induction on $n+|S|$. For the base case $n+1$, let $S \subseteq [n]$ with $S=\{1\}$. Then, $\widehat{Th_{n,k}}(\{ 1\})$ is the fraction of edges in the Boolean hypercube such that it evaluates to different values at the two ends.  Hence, $\widehat{Th_{n,k}}(\{ 1\}) = \frac{n\binom{n-1}{k-1}}{n2^{n-1}} = \frac{\binom{n-1}{k-1}}{2^{n-1}}$ as desired.
		
		For the induction case, let $n+|S| = n+j$ with $S = \{1,2,\ldots,j\}$ for some $1\le j \le k$. Suppose that the result holds for all values strictly  smaller than $n+j$. We now claim that $\widehat{Th_{n,k}}(S) = \frac{1}{2}( \widehat{Th_{n-1,k}}(S\setminus\{j\}) - \widehat{Th_{n-1,k-1}}(S\setminus\{j\}))$. To see this, 
		\begin{align*}
			\widehat{Th_{n,k}}(S) & = \frac{1}{2^n} \sum_{x \in \{0,1\}^n} Th_{n,k}(x) (-1)^{x_1+\ldots +x_j} \\
			& = \frac{1}{2^n}\left ( \sum_{x \in \{0,1\}^n~:~ x_j=0} Th_{n,k}(x) (-1)^{x_1+\ldots +x_{j-1}} -  \sum_{x \in \{0,1\}^n~:~ x_j=1} Th_{n,k}(x) (-1)^{x_1+\ldots +x_{j-1}} \right )\\
			& = \frac{1}{2^n}\left ( \sum_{x \in \{0,1\}^{n-1} } Th_{n-1,k}(x) (-1)^{x_1+\ldots +x_{j-1}} -  \sum_{x \in \{0,1\}^{n-1} } Th_{n-1,k-1}(x) (-1)^{x_1+\ldots +x_{j-1}} \right )\\
			& = \frac{1}{2} \left (\widehat{Th_{n-1,k}}(S\setminus \{j\}) - \widehat{Th_{n-1,k-1}}(S\setminus \{j\})  \right ).
		\end{align*}
		
		Applying induction hypothesis, we get $\widehat{Th_{n,k}}(S)$ as
		\[ \frac{1}{2 \cdot  2^{n-1-1}} \left (\binom{n-1-2(j-1)+1}{k-(j-1)} - \binom{n-1-2(j-1)+1}{k-1-(j-1)} \right ) = \frac{1}{2^{n-1}} \binom{n-2j+1}{k-j} \]
		The last equality follows, since for positive integers $m$ and $r$, $\binom{m}{r} - \binom{m}{r-1} = \binom{m-1}{r-1}$. This completes the induction.
	\qed \end{proof}
    We now consider the case when $i > k$.
    \begin{proposition} \label{prop:fc-values-more-than-k}
		For any $1 \le i < n/2$ and $k < n/2$, $S \subseteq [n]$ with $|S|=k+i$, the value of $\widehat{Th_{n,k}(S)}$ is as follows. \label{prop:fc-all-threshold} 
        \begin{enumerate}[(a)]
            \item For $i=1$, with $|S| = k+1$
		 \[ \widehat{Th_{n,k}}(S) = \begin{cases} 
                         0  & \text{ if } k \text{ is even} \\
                         1/2^{n-1} & \text{ if } k \text{ is odd}
                         \end{cases}
         \] 
         \item For $i = 2$ with $|S| = k+2$
        \[ \widehat{Th_{n,k}}(S) = \begin{cases} 
                         -\frac{k/2}{2^{n-1}}  & \text{ if } k \text{ is even} \\
                         \frac{(k+1)/2}{2^{n-1}} & \text{ if } k \text{ is odd}
                         \end{cases}
         \]
        
        \item For $i \ge 2$ with $|S|=k+i$, \[ |\widehat{Th_{n,k}}(S)| \ge \frac{k/2}{2^{n-1}}     \]
        \end{enumerate}
	\end{proposition}
    \begin{proof}
        \textbf{Part (a)}:  By induction on $k$. There are two base cases to consider: $k=1$ and $k=2$. 
        
        For the base case, $k=1$ we have $|S|=2$. Without loss of generality, let $S= \{1,2\}$. Then, $\widehat{Th_{n,1}}(\{1,2\}) = \frac{1}{2} \left ( \widehat{Th_{n-1,1}}(\{1\}) - \widehat{Th_{n-1,0}}(\{1\}) \right )$. The first term is $1/2^{n-2}$ by \cref{prop:fc-threshold} and the second term is zero as $Th_{n-1,0}$ is a constant function. Hence, $\widehat{Th_{n,1}}(\{1,2\}) = 1/2^{n-1}$. 
        
        For $k=2$ we have $|S|=3$. Without loss of generality, let $S= \{1,2,3\}$. Then, $\widehat{Th_{n,2}}(\{1,2,3\}) = \frac{1}{2} \left ( \widehat{Th_{n-1,2}}(\{1,2\}) - \widehat{Th_{n-1,1}}(\{1,2\}) \right )$. The first term is $1/2^{n-2}$ by \cref{prop:fc-threshold} and the second term is also $1/2^{n-2}$ by case $k=1$. Hence, $\widehat{Th_{n,2}}(\{1,2,3\}) = 0$ as desired. 

        We are now ready to argue the induction case. Suppose $k$ is odd with $S = \{1,2,\ldots,k+1\}$. Then,
        \begin{equation} \label{eq:fc-split}
         \widehat{Th_{n,k}}(S) = \frac{1}{2} \left ( \widehat{Th_{n-1,k}}(S\setminus \{k+1\}) - \widehat{Th_{n-1,k-1}}(S \setminus \{k+1\}) \right ).   
        \end{equation}
          By \cref{prop:fc-threshold}, the first term of~\cref{eq:fc-split} is $1/2^{n-2}$ and the second term, by induction is $0$ (as $k-1$ is even). Hence, $\widehat{Th_{n,k}}(S) = 1/2^{n-1}$.

        For the case of an even $k$, the first term of~\cref{eq:fc-split} is $1/2^{n-2}$ as before and the second term is also $1/2^{n-2}$ by induction (as $k-1$ is odd). Hence, $\widehat{Th_{n,k}}(S) = 0$. This completes the proof of part (a).
        
        \textbf{Part (b)}: By induction on $k$. There are two base cases $k=1$ and $k=2$ similar to Part (a).
        
        For the base case, $k=1$, we have $|S|=3$. Without loss of generality, let $S=\{1,2,3\}$. Then, $\widehat{Th_{n,1}}(\{1,2,3\}) = \frac{1}{2} \left( \widehat{Th_{n-1,1}}(\{1,2\}) - \widehat{Th_{n-1,0}}(\{1,2\}) \right )$. The first term is $1/2^{n-2}$ by part (a) and the second term is $0$. Hence, $\widehat{Th_{n,1}}(\{1,2,3\}) = \frac{1}{2^{n-1}}$ as desired.

        For the base case, $k=2$, we have $|S|=4$. Without loss of generality, let $S=\{1,2,3,4\}$. Then, $\widehat{Th_{n,2}}(\{1,2,3,4\}) = \frac{1}{2} \left( \widehat{Th_{n-1,2}}(\{1,2,3\}) - \widehat{Th_{n-1,1}}(\{1,2,3\}) \right )$. The first term is $0$ by part (a). The remaining term is $-\frac{1}{2}\widehat{Th_{n-2,1}}(\{1,2\})$ (by a similar reasoning as done for Part (b) $k=1$). By Part (a) base case $k=1$, $\widehat{Th_{n-2,1}}(\{1,2\})$ is $1/2^{n-3}$. Hence, $\widehat{Th_{n,2}}(\{1,2,3,4\})=-\frac{1}{2^{n-1}}$ as desired.
        
        We are now ready to argue the induction case. Suppose $k$ is odd with $S = \{1,2,\ldots,k+2\}$. Then,
        \begin{equation} \label{eq:fc-split-partb}
         \widehat{Th_{n,k}}(S) = \frac{1}{2} \left ( \widehat{Th_{n-1,k}}(S\setminus \{k+2\}) - \widehat{Th_{n-1,k-1}}(S \setminus \{k+2\}) \right ).   
        \end{equation}

        By Part (a), the first term of~\cref{eq:fc-split-partb} is $1/2^{n-2}$ (as $k$ is odd) and the second term, by induction hypothesis, is $-\frac{(k-1)/2}{2^{n-2}}$ (as $k-1$ is even). Hence, $\widehat{Th_{n,k}}(S) = \frac{(k+1)/2}{2^{n-1}}$ as desired.

        For the case of an even $k$, the first term in~\cref{eq:fc-split-partb} is $0$ (as $k$ is even) and the second term, by induction hypothesis, is $\frac{k/2}{2^{n-2}}$. Hence, $\widehat{Th_{n,k}}(S) = -\frac{k/2}{2^{n-1}}$ as desired. This completes the proof of part (b).

        \textbf{Part (c)}: We argue the following for $|S|=k+i$ with $i \ge 2$.
        \[ \widehat{Th_{n,k}}(S) \begin{cases}  
        \le - \frac{k/2}{2^{n-1}}  & \text{ if $k$ is even} \\
        \ge \frac{k/2}{2^{n-1}}  & \text{ if $k$ is odd} 
        \end{cases}
        \]
        
        Proof is by induction on $|S|$. The base case of $i=2$ holds by Part (b). Suppose $S = \{1,2,\ldots,k+i\}$ for some $i \ge 2$. Then,

        \begin{equation} \label{eq:fc-split-partc}
         \widehat{Th_{n,k}}(S) = \frac{1}{2} \left ( \widehat{Th_{n-1,k}}(S\setminus \{k+i\}) - \widehat{Th_{n-1,k-1}}(S \setminus \{k+i\}) \right ).   
        \end{equation}

        Suppose $k$ is odd. Consider \cref{eq:fc-split-partc}. For the first summand, with $S'= S\setminus \{k+i\}$ of size strictly smaller than $k+i$ has $k'=k$  which is odd. By induction, $\widehat{Th_{n-1,k'}}(S') = \widehat{Th_{n-1,k}}(S\setminus \{k+i\}) \ge \frac{k/2}{2^{n-2}}$. For the second summand, the set $S' = S\setminus \{k+i\}$ has $k'=k-1$ which is even. With $k'$ being even, induction tells that $\widehat{Th_{n-1,k'}}(S')=\widehat{Th_{n-1,k-1}}(S\setminus \{k+i\}) \le -\frac{(k-1)/2}{2^{n-2}}$. Hence, by \cref{eq:fc-split-partc}, 
         \[ \widehat{Th_{n,k}}(S) \ge \frac{1}{2} \left ( \frac{k/2}{2^{n-2}} +\frac{(k-1)/2}{2^{n-2}} \right )
         = \frac{k-1/2}{2^{n-1}} \ge  \frac{k/2}{2^{n-1}}\]
        The last inequality follows since $k \ge 1$. 

        Suppose $k$ is even. For the first summand, with $S'= S\setminus \{k+i\}$ of size strictly smaller than $k+i$ has $k'=k$ which is even. By induction,  $\widehat{Th_{n-1,k'}}(S')=\widehat{Th_{n-1,k}}(S\setminus \{k+i\}) \le -\frac{k/2}{2^{n-2}}$. For the second summand, the set $S' = S\setminus \{k+i\}$ with $k'=k-1$ which is odd. With $k'$ being old, induction tells that $\widehat{Th_{n-1,k'}}(S')=\widehat{Th_{n-1,k-1}}(S\setminus \{k+i\}) \ge \frac{(k-1)/2}{2^{n-2}}$. Hence, by \cref{eq:fc-split-partc}, 
         \[ \widehat{Th_{n,k}}(S) \le \frac{1}{2} \left ( -\frac{k/2}{2^{n-2}} -\frac{(k-1)/2}{2^{n-2}} \right )
         = \frac{-k+1/2}{2^{n-1}} \le  -\frac{k/2}{2^{n-1}}\]
        The last inequality follows since $k \ge 1$. This completes the induction.
    \qed \end{proof}
    
    We now argue below (in \cref{prop:partition-complexity-ball}) that any Hamming ball over $\{0,1\}^m$ with radius $k$ strictly less than $m/2-\sqrt{m}$ centered at $0^m$, must have a subcube partition complexity of $\Omega(k)$.

    \begin{proposition}\label{prop:partition-complexity-ball}
        Let $H$ be a Hamming ball over $\{0,1\}^m$ of radius $k < m/2-\sqrt{m}$ centered at $0^m$. Then $\calP(H) = \Omega(k)$.
    \end{proposition}
    \begin{proof}

	  It is known that the partition complexity of a Boolean function $f$ on $m$ bits is lower bounded by $\sum_{S \subseteq [m]} |\widehat{f}(S)|$ where $\widehat{f}(S)$ is the Fourier coefficient of $f$ (cf. Lemma 3.8 of \cite{CKLS16}). 
    Hence to lower bound the partition complexity of a Hamming ball, it suffices to compute $\sum_{S} |\widehat{Th_{m,k}}(S)|$.
    
     Every $S \subseteq [m]$ of the same size $i$, has the same value for $|\widehat{Th_{m,k}}(S)|$ which is at least $k/2$ by \cref{prop:fc-values-more-than-k} when $i \ge k+2$. Hence, 
     \begin{align*} 
     \sum_{S} |\widehat{Th_{m,k}}(S)|  & \ge \sum_{i=k+2}^{m/2-1} \sum_{S ~:~ |S| = i} |\widehat{Th_{m,k}}(S)| \ge \sum_{i= k+2}^{m/2-1} \frac{1}{2^{m-1}} \binom{m}{i} \frac{k}{2}  \\
     & \ge \sum_{i= m/2-\sqrt{m}}^{m/2-1} \frac{1}{2^{m-1}} \binom{m}{i} \frac{k}{2} = \frac{k}{2^{m}} \sum_{i= m/2-\sqrt{m}}^{m/2-1} \binom{m}{i} = \Omega(k)
     \end{align*}
     The binomial sum in the last inequality can be shown to be a constant fraction of $2^m$ by a standard application of Chebyshev's inequality. This yields the desired lower bound of $\Omega(k)$.
    \qed \end{proof}
    \begin{lemma}
    \label{lem:partition-complexity-ball-main}
        Let $A \subseteq \Sigma^*$ such that for every $m \ge 1$, $A_m$ is a disjoint union of Hamming balls $H_1, \ldots, H_t$ of radius $k<m/2-\sqrt{m}$ over $\{0,1\}^m$ such that for every $i,j \in [t]$, $\Delta(H_i, H_j) > 1$. Then for every $m\ge 1$, $\calP(A_m) = \Omega(tk)$.
    \end{lemma}

    \begin{proof}
    For a contradiction, suppose that $\calP(A_m) = o(tk)$. By \cref{prop:partition-complexity-balls-union}, which says $\calP(A_m) = \sum_{i=1}^t \calP(H_i)$, there exists an $H_i$ centered at some $z \in \{0,1\}^m$ such that $\calP(H_i) = o(k)$. Consider the set $H':=H_i\oplus z = \{h \oplus z \mid h \in H_i\}$ obtained by taking the bitwise XOR of each string in $H_i$ by $z$.  Observe that $H'$ is a Hamming ball centered at $0^m$ of same radius as that of $H_i$ since the operation performed does not alter the relative Hamming distance between the points. With the subcubes shifted by $z$ also forming a partition of $H'$, we have $\calP(H') = o(k)$ which contradicts \cref{prop:partition-complexity-ball}. This completes the proof.
    \qed \end{proof}
        
    \end{appsection}

\paragraph*{Two Limitations of the Approach towards $\DSPACE(\log^k n) \subseteq \ACL(\Sigma^*)$}
    We first note a limitation of the approach due to the fact that there is a direct simulation of the $\DSPACE(\log^k n)$ machine in the argument. \cite{BCKLS14} observed that there cannot be a step-by-step simulation of Turing machines that use $\omega(s(n))$ space by using catalytic Turing machines that uses $s(n)$  work space and even $2^{s(n)}$ catalytic space. We note that this also implies a limitation of our approach towards almost-catalytic Turing machines as well.
 
    Consider the following family of algorithms that attempts to show $\DSPACE(\log^k(n)) \subseteq \CL$ via error correcting codes as follows: Let $L$ belongs to $\DSPACE(\log^k(n))$ via machine $M$. Then we construct a catalytic  machine as follows:
        (1)~Apply logspace computable transformations to the initial catalytic tape content to make it ``recoverable'' from $O(\log^k n)$ errors. 
        (2)~Run the machine $M$ on input $x$ on the catalytic tape.
        (3)~Correct the $O(\log^k n)$ errors on the catalytic tape and restore $w$.
        (4)~Accept if $M$ accepts and reject otherwise.
            
	
    
    \begin{proposition}\label{prop:simulation}
        There is no simulation of deterministic polylogarithmic space in catalytic logspace via direct simulation and using logspace decodable error correcting codes.
    \end{proposition}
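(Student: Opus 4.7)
The plan is to derive a contradiction via the Hamming (sphere-packing) bound on the code implicit in the proposed template. Fix a language $L \in \DSPACE(\log^k n)\setminus \L$ witnessed by a machine $M$ that, on infinitely many inputs, uses $\Theta(\log^k n)$ space and hence writes to $\Theta(\log^k n)$ distinct work-tape cells. Pick such an input $x$ of length $n$ and let $T_x:\{0,1\}^{c(n)}\to\{0,1\}^{c(n)}$ denote the deterministic logspace-computable map induced on the catalytic tape by step (1) of the proposed catalytic machine on input $x$; since the catalytic tape has fixed length $c(n)$, $T_x$ is a function between equal-sized finite sets.

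First, I would argue that $T_x$ is injective on a subset of size at least $2^{c(n)}/\mathrm{poly}(n)$. Indeed, if two initial contents $w_1\ne w_2$ yield identical catalytic-plus-worktape configurations at the end of step (1), the deterministic remainder of the computation proceeds identically on both runs, so the final catalytic contents coincide, contradicting the restoration of both $w_1$ and $w_2$. Since the worktape carries only $O(\log n)$ bits and hence only $\mathrm{poly}(n)$ states, a pigeonhole over worktape states gives some state $s$ after step (1) for which at least $2^{c(n)}/\mathrm{poly}(n)$ choices of $w$ share worktape $s$; on this large subset, the images $T_x(w)$ must be pairwise distinct. Let $C$ denote the resulting image; then $|C|\ge 2^{c(n)}/\mathrm{poly}(n)$.

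Second, I would argue that $C$ must have minimum Hamming distance strictly greater than $2\cdot\Theta(\log^k n)$. Since step (2) is a direct simulation of $M$ using the catalytic tape as $M$'s work tape, the post-step-(2) content equals $T_x(w)\oplus e$ for some error pattern $e$ of weight at most $\Theta(\log^k n)$ (the number of cells touched by $M$). For the logspace decoder in step (3) to recover $T_x(w)$ (and hence $w$) from $T_x(w)\oplus e$ for every $w$ in the large subset above, any two distinct codewords of $C$ must be at Hamming distance exceeding $2\Theta(\log^k n)$. Invoking the Hamming bound, $|C|\le 2^{c(n)}/V(\Theta(\log^k n))$ where $V(t)=\sum_{i\le t}\binom{c(n)}{i}$. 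A routine estimate for $c(n)=\mathrm{poly}(n)$ yields $V(\Theta(\log^k n))\ge 2^{\Omega(\log^{k+1}n)}$, which grows faster than any polynomial in $n$. Combined with the previous step, this gives $2^{c(n)}/\mathrm{poly}(n)\le |C|\le 2^{c(n)}/2^{\Omega(\log^{k+1}n)}$, a contradiction.

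The main obstacle will be pinning down ``direct simulation'' precisely enough to justify that step (2) modifies roughly $\Theta(\log^k n)$ distinct catalytic cells on the fixed input $x$; the argument rests crucially on this lower bound being tight, since the Hamming bound slackens as the allowable error count shrinks. Any template that genuinely compresses $M$'s computation below $\omega(\log n)$ actively used cells would escape this sphere-packing barrier, but would no longer be a direct simulation in the intended sense, so the proposition holds for the class of algorithms delimited by the template.
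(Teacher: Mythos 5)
Your approach is genuinely different from the paper's, and the comparison is instructive. The paper attacks the template through \emph{time}: it exhibits a $\DSPACE(\log^k n)$ machine $M$ that runs for $2^{\Theta(\log^k n)}$ steps by design, notes that a direct step-by-step simulation inherits this running time independently of the initial catalytic content $w$, and derives a contradiction with the fact (from the $\CL \subseteq \ZPP$ argument) that any catalytic logspace machine must run in \emph{expected} polynomial time over the choice of $w$. You attack it through \emph{space}: injectivity of the step-(1) map modulo the $\poly(n)$ worktape states, followed by a packing argument. The two routes are incomparable — yours avoids constructing a pathological $M$ and avoids any appeal to the $\ZPP$ upper bound, while the paper's is robust to exactly how the decoder operates.

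There is, however, a real gap in the second half of your argument. You assert that the logspace decoder in step~(3) must recover $T_x(w)$ from $T_x(w) \oplus e$ for \emph{every} error pattern $e$ of weight at most $\Theta(\log^k n)$, and hence that $C$ must have minimum distance exceeding $2\,\Theta(\log^k n)$, which then feeds into the Hamming bound. But the errors here are not adversarial: the post-step-(2) content is a \emph{deterministic} function of $T_x(w)$ and $x$, and the ``decoder'' is an arbitrary logspace algorithm, not a nearest-codeword decoder. Nothing in the template forces the image $C$ to be a uniquely-decodable code in the combinatorial sense. A set of strings with tiny minimum distance could, in principle, still be inverted by a logspace map from the post-step-(2) contents, provided that map remains injective. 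So the leap from ``restoration works'' to ``$C$ has distance $> 2t$'' is unjustified, and the sphere-packing inequality does not apply as stated.

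Fortunately the counting idea survives in a cleaner form that bypasses the Hamming bound entirely. Since $M$ is deterministic and (for the simulation to be faithful) starts on a blank work tape, the final contents of the block $B$ of $|B| = \Theta(\log^k n)$ cells that $M$ uses in step~(2) is a \emph{fixed} string $f(x)$ depending only on the input, not on $w$. Thus the post-step-(2) catalytic content, together with the $O(\log n)$-bit worktape state, can take at most $2^{c(n)-|B|}\cdot\poly(n)$ distinct values — all information about $w$ that $T_x$ had stored inside $B$ has been erased. Restoration of every $w$ requires this pair to determine $w$, which is a map from a set of size $2^{c(n)}$ into a set of size $2^{c(n)-|B|}\cdot\poly(n)$; since $|B| = \omega(\log n)$, this is impossible. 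This patched version keeps your pigeonhole-over-worktape-states step but replaces the Hamming bound by the observation that the written block carries no residual information. With that fix your proof is correct and is, to my reading, a more elementary and arguably cleaner route to the proposition than the paper's expected-running-time argument, though it relies on a more committal reading of ``direct simulation'' (namely that $M$'s block is erased and that $M$'s final worktape content is input-determined).
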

    \begin{proof}
	We argue that the direct simulation cannot work as it is. Indeed, it implies that the machine $M$ must necessarily run in expected polynomial time (with respect to choice of initial catalytic content). In other words, every $O(\log^k n)$ machine must run in polynomial time - a statement which can be proved to be false. We argue the same below.
 
	Consider the case when machine $M$ is constructed as follows: $M$ visits all its configurations before halting. There are $O(\exp(\log^{k} n))$ many configurations to the machine and thus it takes time at least $O(\exp(\log^{k} n))$ to run. In step $2$ of our algorithm, we run the machine $M$ directly (i.e. step-by-step). So our algorithm too runs in time at least $O(\exp(\log^{k} n))$. The initial content $w$ does not affect step $2$, so the average running time (over the choice of $w$) is still super-polynomial which is a contradiction to the fact that any $\CL$ machine takes polynomial running time on average over the choice of $w$.
	\qed \end{proof}


 We observe a second limitation of the approach due to the fact that we cannot expect linear codes to have a covering radius as low as required for the algorithm. More precisely, for the approach, we need the covering radius of the code $C \subseteq \{0,1\}^m$ to be at most $\frac{m}{(\log m)^{k-1} \log \log n}$. However, for every code with rate $r$, the covering radius is known\cite{CKMS85} to be at least $m\left(\frac{1}{2}-\frac{\sqrt{r}}{2^{3/2}}\right)$ which is at least $\Omega(n)$ even for constant rate codes. 



\noindent
\paragraph*{Acknowledgments} We would like to thank the anonymous reviewers for pointing out that \cref{thm:ACL-ZPP-L} works for any $A \subseteq \Sigma^*$ (previous versions stated restrictions on $A$), for pointing out Proposition~\ref{thm:code-improved-catalytic-reviewer} and for pointing an issue in earlier proof of \cref{prop:partition-complexity-ball}.

\bibliographystyle{splncs04}
\bibliography{refs}

\begin{thebibliography}{10}
\providecommand{\url}[1]{\texttt{#1}}
\providecommand{\urlprefix}{URL }
\providecommand{\doi}[1]{https://doi.org/#1}

\bibitem{arora-barak}
Arora, S., Barak, B.: Computational Complexity: A Modern Approach. Cambridge University Press, USA, 1st edn. (2009)

\bibitem{BCKLS14}
Buhrman, H., Cleve, R., Kouck\'{y}, M., Loff, B., Speelman, F.: Computing with a full memory: Catalytic space. In: Proceedings of the Forty-Sixth Annual ACM Symposium on Theory of Computing. p. 857–866. STOC '14, Association for Computing Machinery, New York, NY, USA (2014). \doi{10.1145/2591796.2591874}, \url{https://doi.org/10.1145/2591796.2591874}

\bibitem{BKLS18}
Buhrman, H., Kouck\'{y}, M., Loff, B., Speelman, F.: Catalytic space: Non-determinism and hierarchy. Theory of Computing Systems  \textbf{62}(1),  116–135 (jan 2018)

\bibitem{CKLS16}
Chakraborty, S., Kulkarni, R., Lokam, S.V., Saurabh, N.: Upper bounds on fourier entropy. Theoretical Computer Science  \textbf{654},  92--112 (2016). \doi{10.1016/j.tcs.2016.05.006}, computing and Combinatorics

\bibitem{CKMS85}
Cohen, G., Karpovsky, M., Mattson, H., Schatz, J.: Covering radius---survey and recent results. IEEE Transactions on Information Theory  \textbf{31}(3),  328--343 (1985). \doi{10.1109/TIT.1985.1057043}

\bibitem{CLMP24}
Cook, J., Li, J., Mertz, I., Pyne, E.: The structure of catalytic space: Capturing randomness and time via compression. Electron. Colloquium Comput. Complex.  \textbf{{TR24-106}} (2024), \url{https://eccc.weizmann.ac.il/report/2024/106}

\bibitem{CM24}
Cook, J., Mertz, I.: Tree evaluation is in space $o (\log n \cdot \log \log n)$. In: Proceedings of the 56th Annual ACM Symposium on Theory of Computing. p. 1268–1278. STOC 2024, Association for Computing Machinery, New York, NY, USA (2024). \doi{10.1145/3618260.3649664}, \url{https://doi.org/10.1145/3618260.3649664}

\bibitem{CMPDBS12}
Cook, S., McKenzie, P., Wehr, D., Braverman, M., Santhanam, R.: Pebbles and branching programs for tree evaluation. ACM Trans. Comput. Theory  \textbf{3}(2) (jan 2012). \doi{10.1145/2077336.2077337}, \url{https://doi.org/10.1145/2077336.2077337}

\bibitem{DGJST20}
Datta, S., Gupta, C., Jain, R., Sharma, V.R., Tewari, R.: Randomized and symmetric catalytic computation. In: Fernau, H. (ed.) Computer Science -- Theory and Applications. pp. 211--223. Springer International Publishing, Cham (2020)

\bibitem{Goldreich_2008}
Goldreich, O.: Computational Complexity: A Conceptual Perspective. Cambridge University Press (2008)

\bibitem{GJST19}
Gupta, C., Jain, R., Sharma, V.R., Tewari, R.: Unambiguous catalytic computation. In: Chattopadhyay, A., Gastin, P. (eds.) 39th {IARCS} Annual Conference on Foundations of Software Technology and Theoretical Computer Science, ({FSTTCS} 2019). LIPIcs, vol.~150, pp. 16:1--16:13 (2019)

\bibitem{GJST24}
Gupta, C., Jain, R., Sharma, V.R., Tewari, R.: Lossy catalytic computation (2024), \url{https://arxiv.org/abs/2408.14670}

\bibitem{GK06}
Guruswami, V., Kabanets, V.: Hardness amplification via space-efficient direct products. In: Proceedings of the 7th Latin American Conference on Theoretical Informatics. p. 556–568. LATIN'06, Springer-Verlag, Berlin, Heidelberg (2006)

\bibitem{KouckySurvey}
Kouck{\'y}, M.: Catalytic computation. Bull. EATCS  \textbf{118} (2016)

\bibitem{MertzSurvey}
Mertz, I.: Reusing space: Techniques and open problems. Bull. EATCS  \textbf{141} (2023)

\bibitem{AOBF}
O'Donnell, R.: Analysis of Boolean Functions. Cambridge University Press (June 2014), \url{http://dx.doi.org/10.1017/CBO9781139814782}

\bibitem{P23}
Pyne, E.: Derandomizing logspace with a small shared hard drive. Electron. Colloquium Comput. Complex.  \textbf{{TR23-168}} (2024), \url{https://eccc.weizmann.ac.il/report/2023/168}

\bibitem{linearSpielman}
Spielman, D.A.: The complexity of error-correcting codes. In: Chlebus, B.S., Czaja, L. (eds.) Fundamentals of Computation Theory. pp. 67--84. Springer Berlin Heidelberg, Berlin, Heidelberg (1997)

\end{thebibliography}

\ifthenelse{\equal{\movetoappendix}{1}}{
        \appendix
        \section{Appendix}
        \includecollection{appendix}
} { }

\newpage

\end{document}